\documentclass[journal]{IEEEtran}
\usepackage{tikz}
\usetikzlibrary{patterns}
\usepackage{amsmath, amsfonts}  
\usepackage{etoolbox} 
\usepackage{enumitem}
\usepackage{cite}
\usepackage{algorithmic}
\usepackage{algorithm}
\usepackage{array}
\usepackage[caption=false,font=normalsize,labelfont=sf,textfont=sf]{subfig}
\usepackage{textcomp}
\usepackage{stfloats}
\usepackage{url}
\usepackage{verbatim}
\usepackage{graphicx}
\usepackage{authblk}

\usepackage{float}
\usepackage{xspace}
\usepackage[margin=1in]{geometry}
\usepackage{dsfont}
\usepackage{amssymb}
\usepackage{amsthm}
\usepackage{caption}
\usepackage[normalem]{ulem}
\usepackage[hidelinks]{hyperref}
\AtBeginDocument{%

}%
\usepackage{enumitem}
\setlist[enumerate]{itemsep=2pt,topsep=3pt,parsep=3pt,partopsep=2pt}

\newtheorem{lemma}{Lemma}

\newtheorem{defn}{Definition}

\newcommand{\1}[1]{\mathds{1}_{#1}}
\newcommand{\usher}{UShER\xspace}
\newcommand{\sarscov}{SARS-CoV-2}

\newcommand{\R}{\mathbb{R}}
\newcommand{\historydag}{\texttt{historydag}\xspace}

\DeclareMathOperator*{\argmin}{arg\,min}

\newcommand{\beginsupplement}{%
        \setcounter{table}{0}
        \renewcommand{\thetable}{S\arabic{table}}%
        \setcounter{figure}{0}
        \renewcommand{\thefigure}{S\arabic{figure}}%
     }

\hyphenation{Ge-nome Ge-nomes hyper-mut-ation through-put}

\title{The structure of deviations from maximum parsimony for densely-sampled data and applications for clade support estimation}

\author[1, 2, 3]{William Howard-Snyder}
\author[2]{Will Dumm}
\author[2]{Mary Barker}
\author[2]{Ognian Milanov}
\author[1]{Claris Winston}
\author[2]{David~H~Rich}
\author[4, 5, 6]{Marc A Suchard}
\author[2, 7, 8, 9]{Frederick A Matsen IV}
\affil[1]{Department of Computer Science and Engineering, University of Washington, Seattle, WA
}
\affil[2]{Computational Biology Program, Fred Hutchinson Cancer Research Center, Seattle, WA}
\affil[3]{Department of Computer Science, Princeton University, Princeton, NJ}
\affil[4]{Department of Human Genetics, University of California, Los Angeles}
\affil[5]{Department of Computational Medicine, University of California, Los Angeles}
\affil[6]{Department of Biostatistics, University of California, Los Angeles}
\affil[7]{Howard Hughes Medical Institute, Seattle, WA}
\affil[8]{Department of Genome Sciences, University of Washington, Seattle, WA}
\affil[9]{Department of Statistics, University of Washington, Seattle, WA}
\date{}

\begin{document}
\maketitle


\begin{abstract}
\emph{How} do phylogenetic reconstruction algorithms go astray when they return incorrect trees?
This simple question has not been answered in detail, even for maximum parsimony (MP), the simplest phylogenetic criterion.
Understanding MP has recently gained relevance in the regime of extremely dense sampling, where each virus sample commonly differs by zero or one mutation from another previously sampled virus.
Although recent research shows that evolutionary histories in this regime are close to being maximally parsimonious, the structure of their deviations from MP is not yet understood.
In this paper, we develop algorithms to understand how the correct tree deviates from being MP in the densely sampled case.
By applying these algorithms to simulations that realistically mimic the evolution of \sarscov{}, we find that simulated trees frequently only deviate from maximally parsimonious trees locally, through simple structures consisting of the same mutation appearing independently on sister branches.
We leverage this insight to design approaches for sampling near-MP trees and using them to efficiently estimate clade supports.
\end{abstract}

\begin{IEEEkeywords}
    Phylogenetics, Maximum Parsimony, Directed Acyclic Graph
\end{IEEEkeywords}

\section{Introduction}
\IEEEPARstart{A}{lthough} there have been hundreds of papers documenting the performance of phylogenetic algorithms, it can be difficult to understand \emph{why} phylogenetic algorithms do not succeed when they come up short.
Even for maximum parsimony (MP), the simplest phylogenetic criterion, the multitudes of equally parsimonious trees complicates assessment of error, because that phylogenetic error may depend substantially on the arbitrary choice of which MP tree to report.
Understanding MP has recently gained relevance in the regime of extremely dense sampling, in which each virus sample is commonly zero or one mutation away from another previously sampled virus.
Recent research has shown that the generating trees for data sets in this regime are close to being maximally parsimonious using theoretical~\cite{Wertheim2022-bm} and empirical~\cite{Kramer2023-sp} analysis.
However, the structure of these deviations from MP is not yet understood.

In this paper, we develop algorithms to understand the structure of deviations from maximum parsimony (MP) in simulations that realistically mimic the difficulty of phylogenetic reconstruction for \sarscov, and use this understanding to accurately estimate clade support from MP trees.
To do so, we introduce a novel summary statistic, \emph{parsimony diversity}, which enables us to generate data sets with realistic inferential difficulty.
We then develop algorithms to find the nearest MP tree among the many possible MP trees to the simulated tree, and compare the structure of this tree and its associated mutations to the simulated tree.
We find that differences between simulated trees and their most similar MP counterparts primarily consist of simple structures involving parallel child mutations on sister branches.
We explore one application of this observation to estimate clade support, a measure of confidence in particular groupings of observed taxa in the inferred tree, by sampling MP trees and randomly injecting subparsimonious structure.
Experiments on simulated data demonstrate that our support estimator yields more accurate support for simulated densely-sampled data than traditional approaches like the phylogenetic bootstrap or Bayesian clade support using existing implementations.

The steps of this procedure would not be possible without our recent development of a means of compactly representing highly parsimonious trees called the ``history sDAG'' (\autoref{fig:hsdagconstruction}).
Building on the matOptimize software~\cite{USHER2021}, we have developed Larch~\cite{larch}, an open-source software that can search for diverse MP trees on data and produce a history sDAG containing these trees.
Using the history sDAG and this software, we are often able to find well over $10^{10}$ highly parsimonious trees for SARS-CoV-2 data sets, even when we collapse branches without mutations and restrict to trees that are as parsimonious as any that the phylogenetic algorithm has found \cite{dumm2023representing}.
By enabling us to enumerate potentially all MP trees on simulated data, the history sDAG allows an efficient estimation of parsimony diversity, our summary statistic for comparing the difficulty of phylogenetic inference on simulated and real data.
The structure of the history sDAG containing all MP trees on a dataset facilitates dynamic programming algorithms that can rapidly find the closest MP tree to a given simulated tree, even when the number of MP trees is beyond astronomical.
The history sDAG also enables efficient sampling from this large set of MP trees.
In this paper, we show how uniform sampling from the trees in the history sDAG combined with a principled perturbation produces high accuracy clade support estimates.

\section{Methods}

\subsection{The history sDAG}
A key component of our analysis is a structure called the history sDAG.
We present an overview of the history sDAG here, but for further details, please refer to \cite{dumm2023representing}.
The history sDAG enables efficient storage of many phylogenetic histories, which are phylogenetic trees that include additional identifying data on internal nodes, such as inferred ancestral sequences.
In addition to its storage efficiency, this data structure has useful properties when storing MP histories.

When one constructs a history sDAG from a limited number of MP histories, the resulting structure often includes extra histories, and these are guaranteed to also be MP.
These new histories arise from swapping subhistories between the input histories, in a manner that preserves parsimony score.
This capability makes the history sDAG valuable for exploring and summarizing the landscape of MP histories in a dataset.
In this study, we use the history sDAG to store a collection of MP histories efficiently on simulated data and to swiftly identify the MP history most similar to the simulated history.
\begin{figure*}[!t]
\centering
\begin{tikzpicture}
    \def\firstcolx{5.6}
    \def\secondcolx{0}
    \def\thirdcolx{11}
    \def\toprowy{6.1}
    \def\labely{-3.5}

    \node[inner sep=0pt] (intreeone) at (\firstcolx, 0)
        {\includegraphics[width=0.28\textwidth]{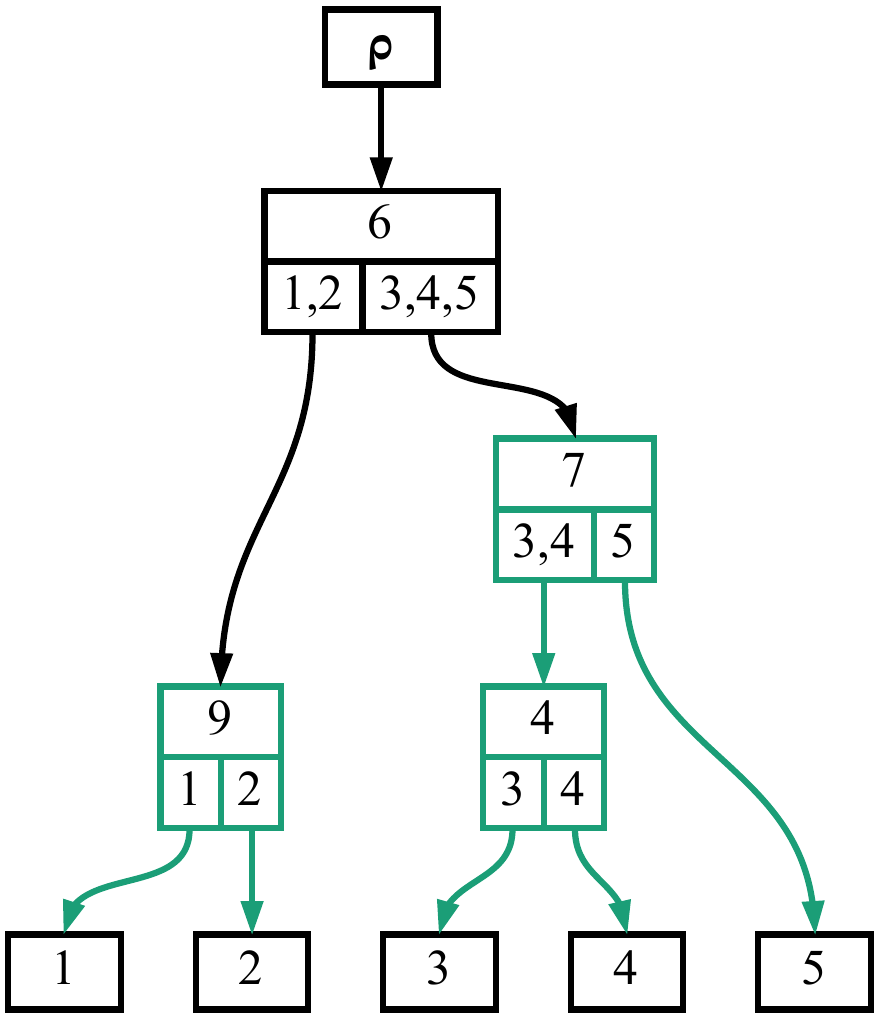}};
    \node[inner sep=0pt] (intreetwo) at (\firstcolx, \toprowy)
        {\includegraphics[width=0.28\textwidth]{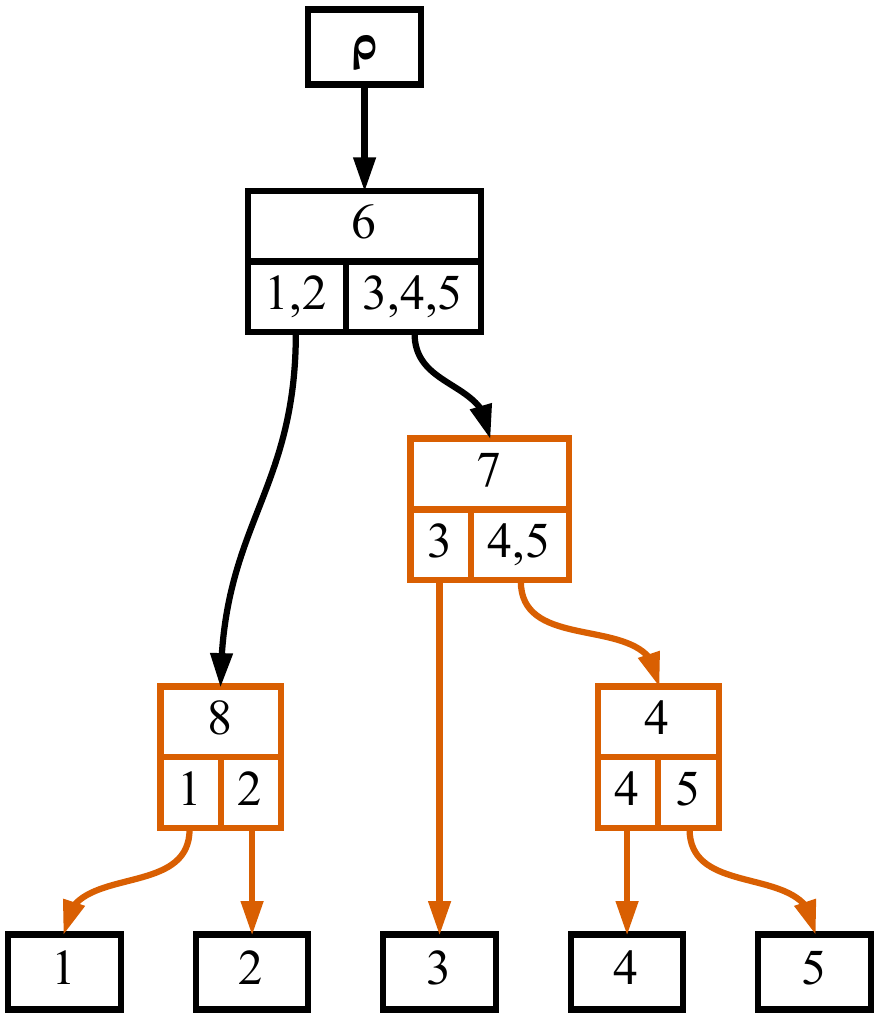}};
    \node[inner sep=0pt] (b) at (\firstcolx, \labely)
        {\small \textbf{(b)}};
    \node[inner sep=0pt] (dag) at (\secondcolx, 0)
        {\includegraphics[width=0.31\textwidth]{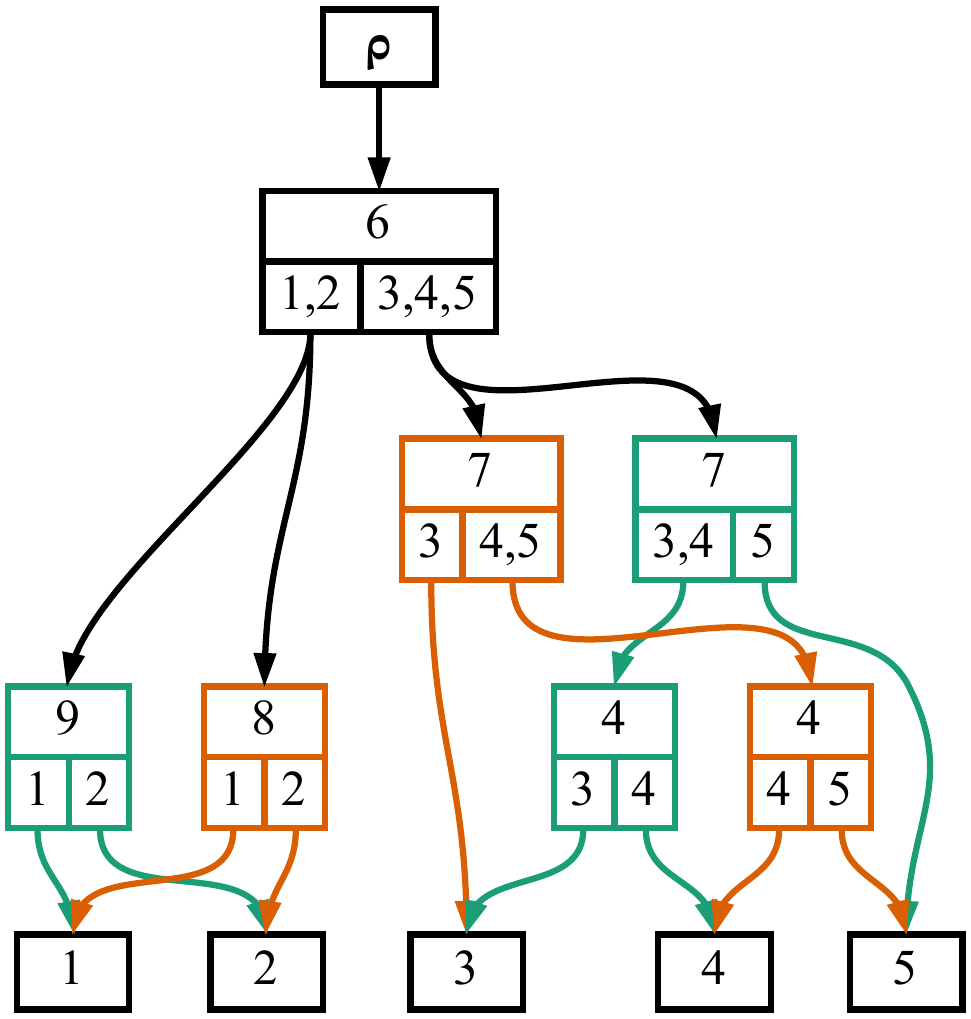}};
    \node[text width=3cm,align=center]  at (\secondcolx, \toprowy)
    {\small%
        $
        X \begin{cases}
            ~~1&~\text{CAC~~~~~~~}\\
            ~~2&~\text{AAG}\\
            ~~3&~\text{GCA}\\
            ~~4&~\text{GCT}\\
            ~~5&~\text{ACT}\\
        \end{cases}
        $
        \begin{align*}
            \,6&~~~\,\,\text{AAA}\\
            \,7&~~~\,\,\text{ACA}\\
            \,8&~~~\,\,\text{CAG}\\
            \,9&~~~\,\,\text{AAC}\\
        \end{align*}
        };
    \node[inner sep=0pt] (a) at (\secondcolx, \labely)
        {\small \textbf{(a)}};
    \node[inner sep=0pt] (extraone) at (\thirdcolx, 0)
        {\includegraphics[width=0.28\textwidth]{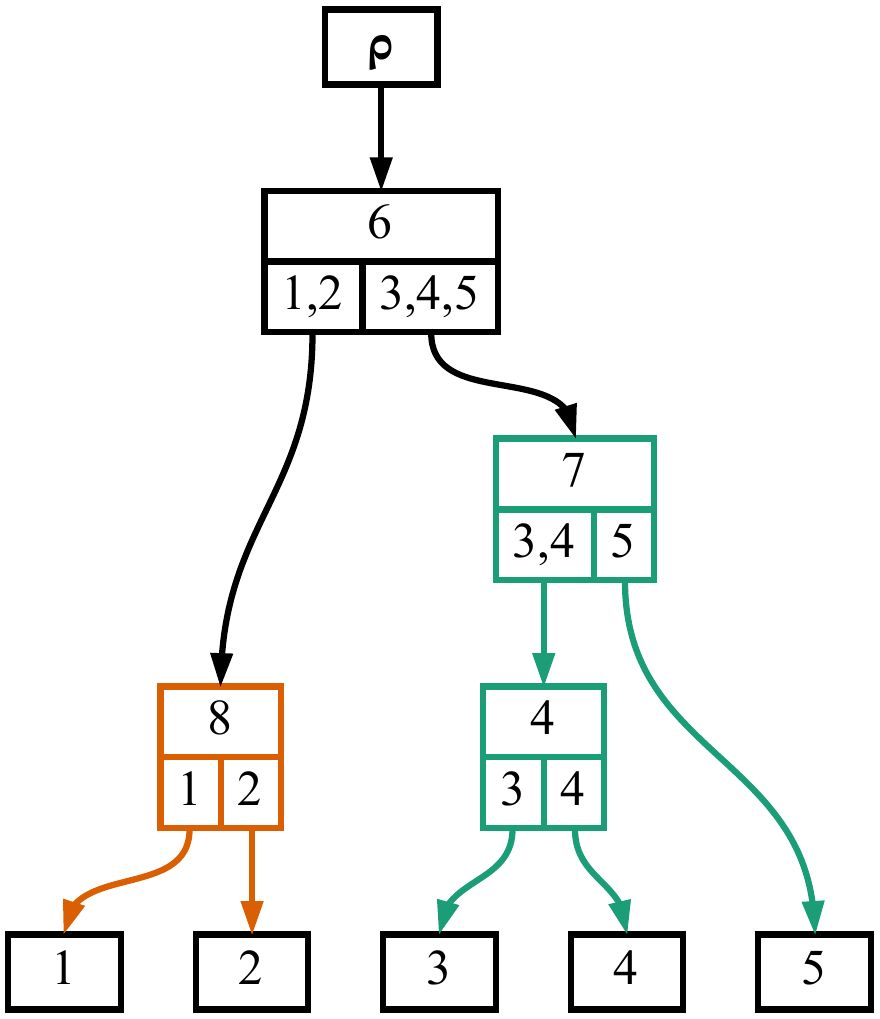}};
    \node[inner sep=0pt] (extratwo) at (\thirdcolx, \toprowy)
        {\includegraphics[width=0.28\textwidth]{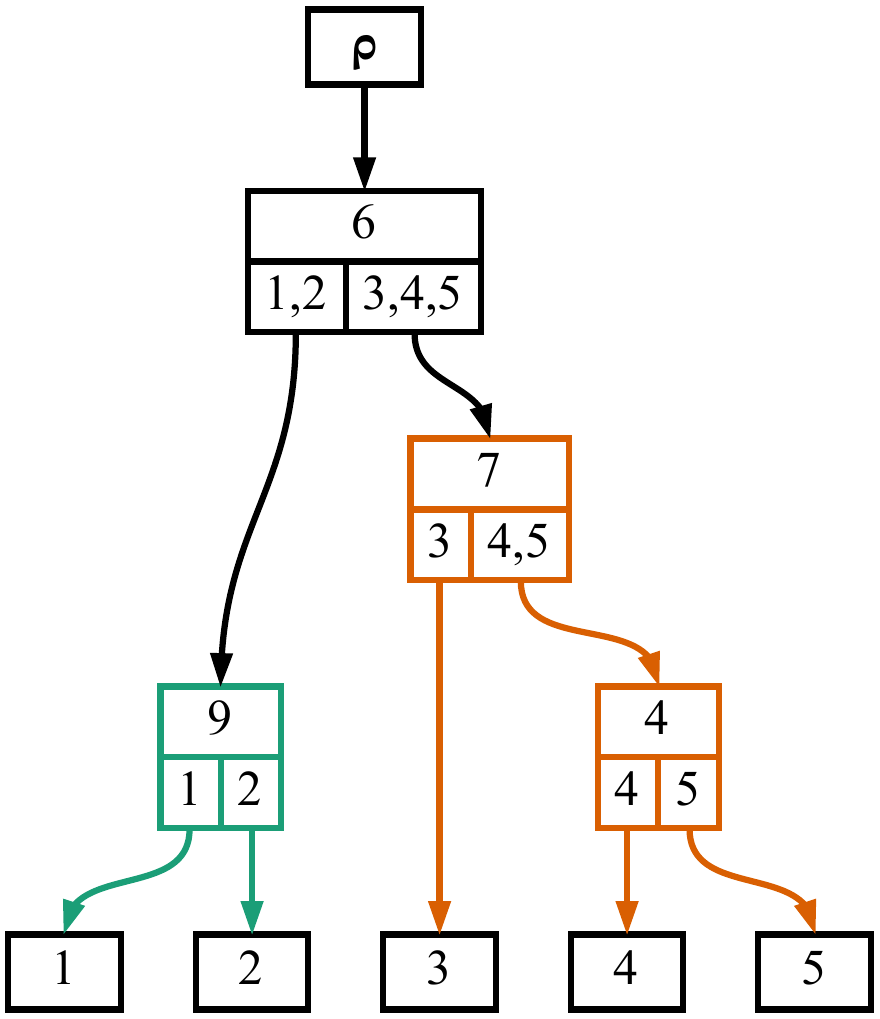}};
    \node[inner sep=0pt] (d) at (\thirdcolx, \labely)
        {\small \textbf{(c)}};
\end{tikzpicture}
\caption{\
    A sequence alignment $X $ and a history sDAG containing four histories relating the sequences of $X $ (a).
    Columns (b) and (c) show all histories contained in the history sDAG.
    A ``history'' is a tree structure such that each node consists of a label $\ell $ (shown on the top of each node), and a child clade set $U $, a set of sets of leaf node labels reachable from the node's children (shown on the bottom half of each node).
    The history sDAG is a graph union of the two histories in (b), allowing the green and orange subtrees to be swapped, creating the histories in (c).
    The top of each node shows the node label, which is a sequence, while the bottom shows child clade sets, with edges drawn to exit the child clade set which matches the target node's clade.
    Colors illustrate alternative subtrees, and are not part of the data structure.
}%
\label{fig:hsdagconstruction}
\end{figure*}

The history sDAG is simply a graph union of histories, augmented with redundant node information so that individual histories can be easily identified within the structure of the history sDAG.
Specifically, histories are defined so that the information augmented to each node consists not only of the observed and ancestral sequence data, but also the \emph{subpartition} of the node's clade formed by the clades of the node's children.
\autoref{fig:hsdagconstruction} shows an example of a simple history sDAG and the four histories it contains.

In preparation for Definition~\ref{defn:history}, which formalizes this information and the notion of histories in this paper, let $X $ be a set of aligned nucleotide sequences, and let $(V, E) $ be a directed graph with a \emph{universal ancestor (UA) node} denoted $\rho\in V$, and with the edge set $E \subset V\times V $.
For each node $v\in V \setminus \{\rho\} $, let $v $ be a pair $(\ell, U) $.
The sequence $\ell $ and the set $U $ are referred to as the \emph{label} and \emph{child clade set} of $v$, respectively.
If $v $ is a leaf node, $\ell\in X$ and $U = \emptyset $.
Let $C(v) \subset X$ be the \emph{child clade} of $v$, consisting of the set of labels of leaf nodes reachable from $v $.

\begin{defn}
    \label{defn:history}
    A directed acyclic graph $(V, E) $ with a universal ancestor node is a \emph{history} if it is tree-shaped, and if for any non-leaf node $v = (\ell, U) \in V $, $U$ is such that
    \begin{equation*}
        U = \left\{C(v_c) \mid v_c \in \text{Ch}(v) \right\},
    \end{equation*}
    where $\text{Ch}(v)$ are the children of $v$.

    A structure $(V', E') $ is a \emph{sub-history} of $(V, E) $ if $V'\subset V $ and $E'\subset E $, and these sets contain exactly one node $v\in V $ and all nodes and edges reachable from $v$ in $(V, E)$.
\end{defn}

Given this definition of a history, we can now concisely define the history sDAG.
\begin{defn}
    Let $T $ be a set of histories, and define
    \begin{align*}
        V' &= \bigcup_{(V, E) \in T} V \text{ and}\\
        E' &= \bigcup_{(V, E) \in T} E .
    \end{align*}
    Then $(V', E') $ is a \emph{history sDAG}.
    We may also say that $(V', E') $ is the history sDAG \emph{constructed from $T $}.
    For any $V_t \subset V' $ and $E_t \subset E' $ such that $(V_t, E_t) $ is a history, then we say that the history sDAG $(V', E') $ \emph{contains} the history $(V_t, E_t) $.
\end{defn}

Since the structure of a history determines the child clade set of each node, the definition of a history implicitly guarantees that edges have additional properties.
Notice that for any edge $e\in E $ in a history $(V, E) $, if $e = \left((\ell, U), v_c \right) $, then $C(v_c) \in U $.
Furthermore, for each node $v = (\ell, U) \in V $, and for each child clade $C \in U $, there must be exactly one edge $(v, v_c) \in E $ such that $C(v_c) = C $.
Additional edges descending from $v $ with this property cannot exist.
If they did, leaves with labels in $C $ would be reachable by multiple paths, which is impossible since a history is a tree.
This is an important property of edges in a history that allows us to recover individual histories from the history sDAG.

To sample a history from a history sDAG, one need only choose an edge descending from the UA node $\rho $, then choose one edge descending from each child clade of each node of the chosen edge's target node.
By continuing this process until the leaf nodes, which have no child clades, one obtains a history contained in the history sDAG.
In~\autoref{fig:hsdagconstruction}, edges are drawn exiting their parent node from the child clade that matches their target node's clade, and each history in~\autoref{fig:hsdagconstruction} can be seen as the result of this process of iteratively choosing edges starting from the UA node of the history sDAG.
We can sample uniformly from the collection of histories in the history sDAG by weighting our choice of edge descending from each clade proportional to the number of subtrees possible below the edge's target node.

\subsection{Minimum weight trimming}
\label{sec:min_weight_trim}

As a combinatorial object, the history sDAG can easily contain a vast number of histories, typically combinatorially larger than the set of histories from which it is constructed.
These additional histories result from swapping subhistories between the input histories, a simple example of which \autoref{fig:hsdagconstruction} shows.
In previous work \cite{dumm2023representing}, we showed how it is not uncommon for the history sDAG on real data to have 8 orders of magnitude more histories than the number of histories used to build it.

Conveniently, the structure of the history sDAG offers an efficient way of computing certain weights on those histories, and identifying the subset of histories which maximize or minimize such a weight.
We refer to this process of identifying optimal histories with respect to a weight as \emph{trimming} the history sDAG.
In \cite{dumm2023representing}, we describe a dynamic programming algorithm to trim the history sDAG so that it expresses only its minimum weight histories.
The algorithm involves removing all edges which point to suboptimal subhistories, and can be realized in two traversals of the history sDAG.
We define that algorithm in more detail below.

This efficient trimming algorithm can be applied for any history weight whose value decomposes as a sum over edges in each history.
More precisely,  let $T $ be the set of all histories contained in history sDAG $(V, E)$, and let $g: T \to \R $ be a function which assigns weights to histories.
For any history $t\in E $, let the set of edges in $t $ be denoted $E_t $.
The trimming algorithm can be used if there exists an edge weight function $f:E\to \R$ so that $g(t) = \sum_{e\in E_t} f(e) $, for any history $t\in T $.

Given such a history weight, we can compute the minimum weight of any subhistory beneath a node $v\in V $, which we will denote $M_f(v) $, in a single post-order traversal of the history sDAG \cite{dumm2023representing}.
Denote this operation $\textbf{AnnotateMinWeight}(V, E)$.
The full trimming algorithm is presented in the Supplementary Materials (\autoref{sec:min_trim_pseudocode}), and involves removing edges from the history sDAG which do not minimize subhistory weight below a node.
The output of this algorithm, which we will call $\textbf{MinTrim}(V, E, f) $, is a new history sDAG that contains only those histories in the history sDAG $(V, E) $ that minimize the weight function $g $.

This trimming algorithm is essential for our methods in two ways.
First, it allows us to trim the history sDAG to contain only histories that minimize the parsimony score, which is a sum of mutations on each edge in a history.
We also use this algorithm to trim a history sDAG containing MP histories to contain only histories minimizing Robinson-Foulds (RF) distance to a fixed reference history.
In the Supplementary Material (~\autoref{sec:rf_distance_decomposes}) we describe how RF distance can be decomposed as a sum over edges, making this application of the algorithm possible.

\subsection{MP-history search with the history sDAG}
\label{sec:hdag_mp_tree_search}

Many of our algorithms depend on producing and efficiently storing a large collection of MP histories.
We describe this subroutine here.
Given a set of aligned sequences $X$ and a number of iterations $n$, we execute the following steps:
\begin{enumerate}
  \item Build a tree from the sequences in $X$ (with \usher{}) by greedily placing leaves one at a time so as to minimize the parsimony score.
  \item Build a history sDAG $(V, E)$ from that tree.
  \item Repeat the following steps $n$ times.
  \begin{enumerate}
    \item Sample a history $t$ uniformly from $(V, E)$.
    \item Greedily select SPR moves (with Larch) to minimize the parsimony score of $t$.
    \item Graph union the result to $(V, E)$.
  \end{enumerate}
  \item Return $\textbf{MinTrim}(V, E, f)$ where $f: E \rightarrow \mathbb{Z}$ counts the number of mutations on edge $e$.
\end{enumerate}
Sampling a history in step 3a and adding it to the history sDAG in step 3c take time proportional to the number of nodes, which is $O(X)$.
Optimizing the history greedily in step 3b takes $O\left(X^2 \log X\right)$ in the worst case, but Larch caches Fitch sets and uses parallelism to make it run very fast in practice.
The overall runtime of this algorithm is therefore $O\left(E + n X^2 \log X\right)$ and the space complexity is $O(E + V)$.
In our experiments, we use $n = 20,000$.
We refer to the trees contained in the trimmed history sDAG as \emph{inferred-MP histories}.
Recall that although we may only add up to $n$ histories to the history sDAG, the number of histories discovered is often much larger than $n$ due to subhistory swapping.
This allows our algorithm to find many orders of magnitude of inferred-MP histories, while maintaining space and memory efficiency.
For instance, this procedure found over $10^{20}$ inferred-MP histories on the \usher{}-clade A.2.5 with 620 unique sequences.

Steps 2 and 3 are implemented in Larch, an open-source C++ implementation of the history sDAG at \url{https://github.com/matsengrp/larch}.
We implemented step 4 using the Python implementation of the history sDAG at \url{https://matsengrp.github.io/historydag}.
All bash scripts, Python experiments, code for generating plots, and instructions on how to reproduce results can be found in our GitHub repository at \url{https://github.com/matsengrp/hdag-benchmark}.

The MP trees we attempt to find, and those stored in the resulting history sDAG are \emph{collapsed}, in the sense that all non-pendant edges must have mutations on them.
We make this choice to avoid storing nodes which result from arbitrary binary resolutions of clades, when there is no signal in the sequence data to support a particular resolution.
Therefore, the true number of binary MP trees is much larger than the number of MP trees stored in the history sDAG.
Supposing that all collapsed MP histories were contained in the history sDAG however, any binary MP history could be realized as a resolution of a collapsed history, with resolved edges containing no mutations.

\subsection{Parsimony diversity}
\label{sec:simulations}

Real sequence data often support many phylogenetic trees, making it challenging to accurately infer the true generating tree.
Given that we want our claims about the simulated data to generalize to real data, we  ensure that phylogenetic inference on the simulated data is sufficiently difficult.
We use Parsimony Diversity (PD) as a proxy for how hard an inference problem is.
Parsimony Diversity is defined as \emph{the number of unique clades among fully-collapsed MP trees}.
Under the MP criterion, these are the groupings of taxa that the data supports.
Additionally, in the densely sampled regime, the MP criterion and ML criterion converge \cite{Wertheim2022-bm}, so the clades that appear in MP trees should be a reasonable summary of the most likely clades in our setting.
Previous work shows that similar summary statistics measuring the diversity of MP trees are the most important predictors of the difficulty of phylogenetic inference \cite{haag2022}.
In the sections below, we describe how we use Parsimony Diversity to tune the inferential difficulty of our simulated datasets.

\subsubsection*{\bf Simulations}
Following previous work \cite{MAPLE2022}, we start with the background tree on over 2 million \sarscov{} genomes at \url{http://hgdownload.soe.ucsc.edu/goldenPath/wuhCor1/UShER_SARS-CoV-2/2022/10/01} \cite{mcbroome2021} and inferred using \usher{} \cite{USHER2021}.
We first select subtrees containing between 500 and 1000 (non-unique) leaves.
We must also fully resolve these subtrees because directly simulating on the \usher{} subtrees would create too many multifurcations from edges without mutations.
Then, for each subtree we:
\begin{enumerate}
    \item Randomly resolve all multifurcations and give new edges small branch lengths of 1/1000th the length of a branch with a single mutation.
    \item Use phastSim \cite{2022phastsim} to simulate sequences on the tree starting with the Wuhan-1 reference sequence.
          We repeat this step until every leaf sequence forms a monophyletic clade with its duplicate sequences.
    \item Retain a subset of the leaves so that leaves are uniquely labeled by their sequences.
    \item Write an alignment containing all the unique leaf sequences in this tree.
\end{enumerate}

In step 2, we simulate using the UNREST substitution model with empirically determined rates, Gamma distributed rate variation, and hypermutation.
On average, the total tree length was 0.032 and the substitution rate per site was 1.
We include hypermutation to better fit the patterns of mutability observed in \sarscov{}, which indicate that some sites have much larger base-to-base mutation rate than others, possibly reflecting selective pressure \cite{demaio2021}.
PhastSim \cite{2022phastsim} models this effect by setting sites as hypermutable with probability $p$ and scaling a randomly selected base-to-base mutation rate $r$.
We set $p=0.01$ and use a grid search to tune the hypermutation rate parameter $r$ and the shape parameter $\alpha$ of the Gamma distribution to reflect the Parsimony Diversity observed in real data.

The goal is to match the PD of the simulated data to that of the real data.
However, PD tends to increase with the number of unique sequences in the dataset.
Therefore, we compare the PD of the data simulated on an \usher{} subtree to the dataset consisting of the observed sequences in that subtree.
Still, the simulated dataset could differ in size from the corresponding real one, since we only use unique sequences.
To correct for this, we also divide the number of nodes inferred on each dataset by the size of the dataset before comparing the two.
We call this metric Corrected Increase in Parsimony Diversity (CIPD) and define it as
\begin{equation}
\label{eq:corrected_pd_increase} 
    \text{CIPD} = \frac{\text{Simulation PD}}{\text{Real data PD}} \cdot \frac{\text{Number of tips in real data}}{\text{Number of tips in simulation}}.
\end{equation}
Compared to real data, simulated data have more MP groupings of taxa per taxon when CIPD is larger than 1, and fewer when CIPD is less than 1.
When CIPD is exactly 1, that means the simulated data and real data admit similarly diverse sets of MP trees.

We use CIPD as a proxy for the relative increase in difficulty between real and simulated data.
Previous work that simulates data in the densely sampled regime \cite{MAPLE2022} does not evaluate the inferential difficulty of their simulated data, and our analysis in \autoref{sec:cipd_maple} shows that the PD of their simulated data was on average lower than the PD of real data.
In our work, we use a nearly identical simulation procedure, but tune the parameters in our simulation to make the inference problem sufficiently difficult.

\subsubsection*{\bf Tuning simulation hyperparameters}
In total, for each of the 16 parameter combinations, we extracted 8 different \usher{} subtrees.
For each subtree, we conducted 5 trials, where each trial consisted of randomly resolving the subtree, simulating sequences along its branches with phastSim \cite{2022phastsim}, and computing the CIPD for that trial.
We compute the CIPD by producing a large collection of MP trees stored in the history sDAG, as described in \autoref{sec:hdag_mp_tree_search}, then finding all the unique clades among histories represented in the history sDAG.
This can be done efficiently by looking at each node $v = (\ell, U)$ and storing the union of child clades $\cup_{C \in U} C$ in a set.
Note that it is unlikely that our history sDAG contains all MP trees for the simulated or real datasets, thus we are underestimating the PD for both.
However, since we have no reason to think that this error will be larger for real data than simulated (or vice versa), we expect this procedure generates a reasonable estimate for CIPD.
We find that with $\alpha = 0.1$ and $r = 50$ the simulation produces a realistic CIPD of 0.97.
The median CIPD value for each combination of $\alpha$ and $r$ are shown in \autoref{fig:pd_heatmap}.

\section{Results}

\subsection{Identifying subparsimonious structures in trees}
\label{sec:identifying_pcms}

\begin{figure*}[!t]
    \centering
    \includegraphics[width=.75\textwidth, trim={0cm 0cm 0cm 0cm}]{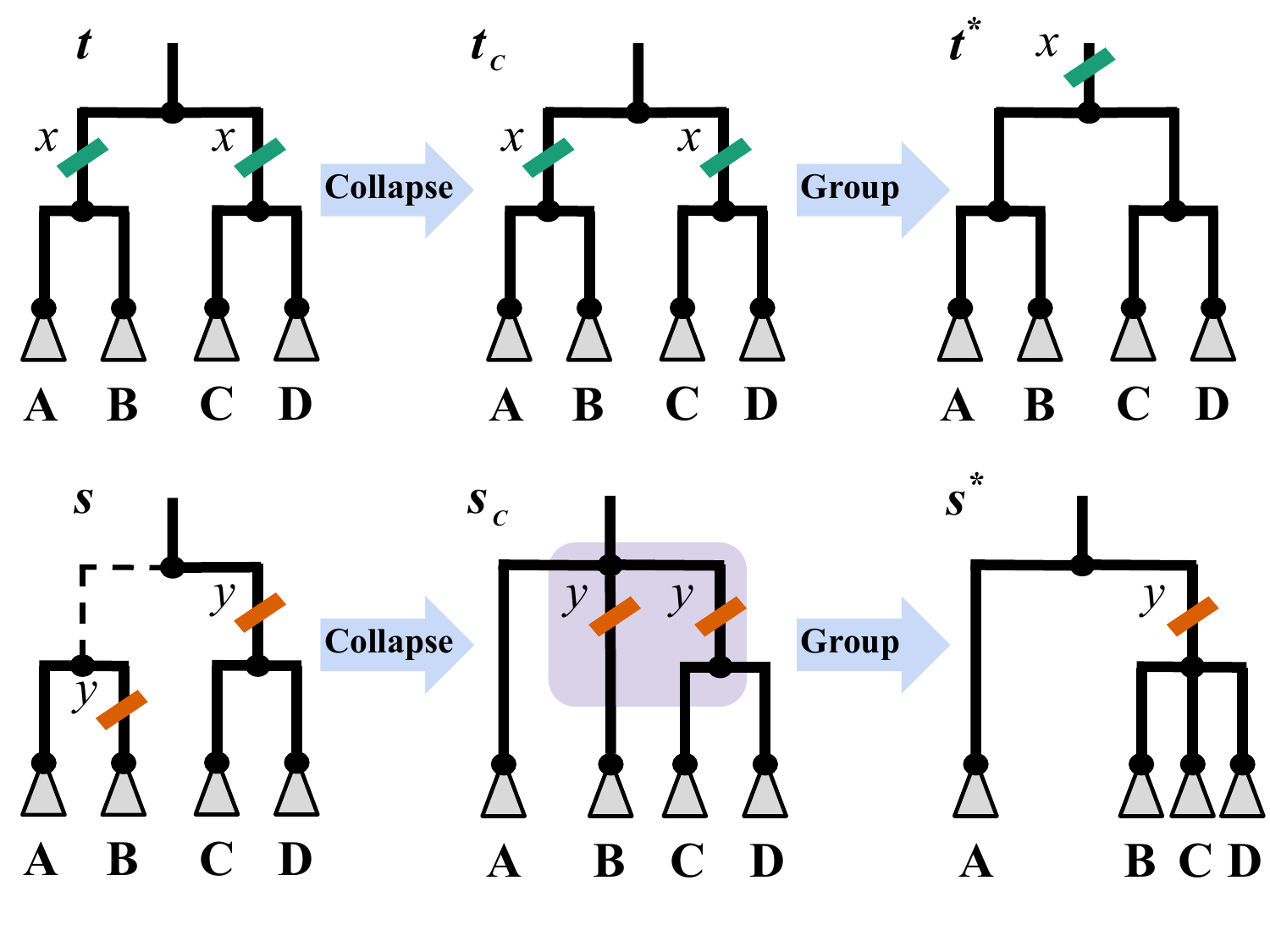}
    \caption{\
        PCMs are locally duplicated mutations.
        Two examples of PCMs occurring in simulated trees $t $ and $s $, and the corresponding most similar MP trees $t^* $ and $s^* $, are shown.
        The PCM in $s $ causes a structural difference from the MP tree $s^* $.
        A PCM is detected by examining the collapsed simulated trees $t_c $ and $s_c $ for parallel child mutations on sister branches.
    }%
    \label{fig:pcm_example}
\end{figure*}

When we simulate realistic sequence data using the methods in \autoref{sec:simulations}, we find that the generative tree (topology) $t$ is rarely MP.
Let us refer to the trees contained in the trimmed history sDAG (\autoref{sec:hdag_mp_tree_search}) as \emph{inferred-MP trees}.
Despite being subparsimonious, $t$ often shares many clades with these inferred-MP trees, and the ways in which $t$ deviates from them are through local simple structures.

The most common type of local structural difference results from mutation patterns we call parallel child mutations (PCMs).
A PCM is a subset of mutations that occur independently on two or more sister branches when a tree is collapsed to include only branches with mutations.
The idea is that a PCM is a feature of a history representing independent, convergent evolution on nearby branches.
PCMs are by definition subparsimonious, but they do not always induce subparsimonious structure.
For example, consider the top example in \autoref{fig:pcm_example}.
The subset of the true phylogenetic history $t$ is on the left and the subset of the history we infer using the MP criterion $t^*$ is on the right.
The green tick indicates a set of mutations $x$ that the taxa in subhistories A, B, C, and D all share.
To make the simulated history more parsimonious, we can replace the two mutations with a single one on the branch corresponding to the clade $A \cup B \cup C \cup D$, as in $t^*$.
However, this does not change the tree topology itself.

We are more interested in situations where PCMs create subparsimonious \emph{topology}.
For example, consider the bottom example in \autoref{fig:pcm_example}.
The subset of the true phylogenetic history $s$ is on the left and the subset of the history we infer using the MP criterion $s^*$ is on the right.
In this case, the orange tick indicates a mutation shared by B, C, and D.
When $s$ is collapsed to $s_c$ by removing the dashed branch, and B, C, and D are grouped under a single branch, the topology changes.
Specifically, $s^*$ contains the split $A \mid B, C, D$, which is incompatible with the split $A, B \mid C, D$ in $s$.
We observe that simulated trees have many topologically subparsimonious PCM-induced structures like those in $s$, but are globally quite similar to an MP tree.

\begin{figure*}[!t]
    \centering
    \includegraphics[width=\textwidth]{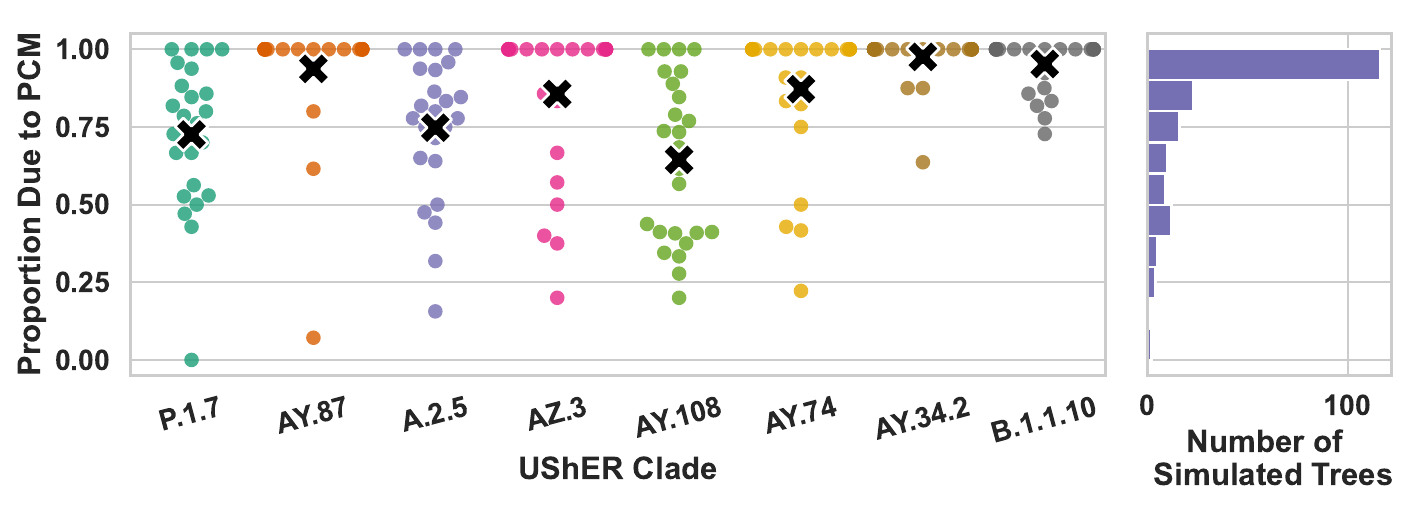}
    \caption{\
        For simulations based on each \usher{}-clade, the swarm plot on the left shows the proportion of nodes in an MP tree that are missing in the corresponding true tree because of a PCM.
        The black X denotes the average within each clade.
        Some points on the $y=1$ line are overlapping, so the histogram shows the number of simulations with a particular proportion of differing nodes that can be explained by PCMs.
    }%
    \label{fig:PCM_result}
\end{figure*}

Examining all 200 simulated trees programmatically, we find that in at least 53.5\% of the trials the \emph{entire, tree-wide} difference between simulated tree and the nearest inferred-MP tree could be explained by PCMs (see Supplementary Materials \autoref{sec:identifying_pcm_details} for further method details).
\autoref{fig:PCM_result} breaks down the proportion of differing nodes that are due to a PCM by \usher{}-clade.
The \usher{}-clade used to seed the simulation significantly affects this proportion.
For clade AY.108, this number is as low as 60.6\%, however for many clades it's over 80\%.
Across all simulations, 1693 differing nodes out of the 2270 total (74.6\%) could be explained by PCMs.
These proportions are likely underestimates, since there may exist trees of the same parsimony score that are more similar to simulated trees, which we failed to find in our MP tree search.
Later, we show how this insight can be used to estimate clade supports by modifying trees from the history sDAG to include these subparsimonious structures.

\subsection{PCM frequency increases with mutation rate}
\label{sec:pcms_vs_mutation_rate}

We hypothesized that the frequency with which a substitution is involved in a PCM increases with the mutation rate at the substitution site and target base.
After all, instances of these high-rate substitutions will happen more frequently, and thus are more likely to occur independently on sister branches in PCMs.
Getting a better understanding of this relationship could help us predict the frequency of PCMs in real examples, and potentially inform the design of phylogenetic algorithms that leverage this information.

\begin{figure}[!t]
    \centering
    \includegraphics[width=0.48\textwidth]{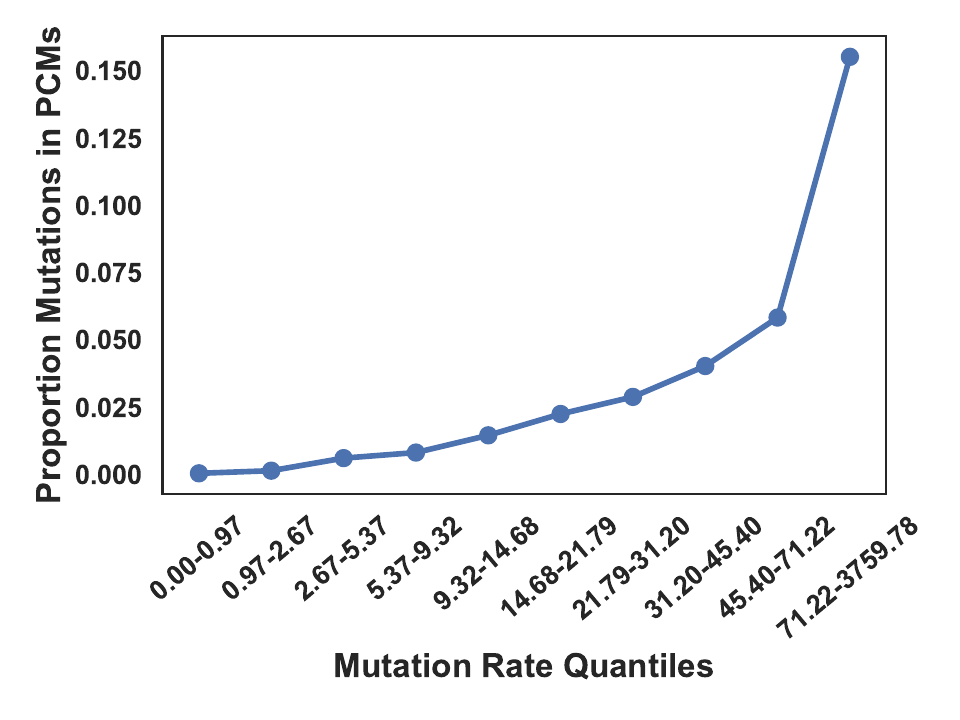}
    \caption{\
        The pointplot shows the relationship between PCM proportion and mutation rate. The x-axis shows mutation-quantile buckets that contain mutations whose rate is within a given interval.
        The intervals were selected so that each contains 10\% of all the mutations that occur across all 400 simulations.
        The y-axis is the proportion of mutations in a given bucket that were a part of a PCM.
    }%
    \label{fig:PCM_vs_mutation_rate}
\end{figure}

Indeed, when we empirically examine this relationship in our simulations, we find that there is a clear increasing relationship (\autoref{fig:PCM_vs_mutation_rate}).
To perform this analysis, we summarized all substitutions that occurred in the simulations, along with their mutation rates and whether they are part of a PCM.
Then, we selected decile intervals of mutation rate and plotted the average number of mutations within each interval.
The frequency of PCMs steadily increases with mutation rate for the first nine deciles.
No mutations with a mutability between 0 and 1 occur in PCMs, while over 5\% of the mutations with a rate in the 45 to 70 range occur in PCMs.
From the ninth to the tenth decile there is a sharp increase by almost 10\%, which we suspect is due to the fact that the last decile contains mutations with much larger mutation rates.

\subsection{Estimating clade support from MP trees}
\label{sec:support_estimation}
One useful output of many phylogenetic algorithms is a measure of confidence in particular groupings of observed taxa in the inferred tree.
This \emph{clade support} measurement provides context for an interpretation of the inferred tree, such as an observation about the number of introductions that seeded an outbreak in a region.
Often, clade support is computed by sampling many trees from a posterior distribution over trees using, e.g., Markov chain Monte Carlo (MCMC) and returning the proportion of samples that contain the given clade \cite{mrbayes2012,Drummond2007BEAST}.
In this case, a clade's support can be interpreted as the probability that that clade is contained in the true tree under the specific evolutionary model assumptions.
However, running MCMC can be prohibitively slow for large sequence alignments.
Another approach is the bootstrap, which resamples the columns of the sequence alignment and then infers a tree on each resampled alignment \cite{Felsenstein1985Bootstrap}.
This approach is less attractive in the case of densely-sampled viral data, where a given true edge may be supported by only one or a few mutations~\cite{Morel2020-mb}.
In this paper, we present two alternative support estimators that leverage the set of MP trees and can be computed quickly using the history sDAG.

To explore our observation that PCMs are the primary reason that realistic trees fail to be maximally parsimonious, we estimate clade support by sampling from two simplified distributions.
The first distribution is uniform on all inferred-MP trees, and the second is more diffuse, introducing PCMs to inferred-MP trees during the sampling process.

\subsubsection*{\bf Uniform MP posterior}
As a first attempt (to be improved upon below) at estimating clade support from collections of highly parsimonious histories, one can make the simplifying assumption that the posterior is uniform among the MP histories (Definition \ref{defn:history}).
After all, in MP inference these histories are deemed equally optimal, and we have no reason to expect any individual one to be more likely than the rest.
In that case, the support for a clade is simply the proportion of MP histories which contain that clade.

To estimate this quantity, we build a large history sDAG (\autoref{sec:hdag_mp_tree_search}), trim it to express only the most parsimonious histories (\autoref{sec:min_weight_trim}), and count the proportion of inferred-MP histories that contain a given clade.
This last step can be done efficiently with a dynamic programming algorithm in the history sDAG.
Note that this procedure only approximates the uniform MP posterior because the history sDAG will probably only contain a subset of the MP trees.
Still, to our knowledge, this is the best way to produce a large set of highly parsimonious (and likely MP) trees.

A more fundamental limitation of this support estimator is that it relies on the assumption that non-MP trees have no probability.
In our simulations, we find that the simulated tree is rarely MP.
So, under this uniformity assumption, the simulated tree will always have zero probability, which seems highly undesirable.

\subsubsection*{\bf MP-diffused posterior}
\label{sec:mp_diff}
To account for the fact that non-MP histories should be assigned non-zero probability, we  define a distribution that spreads out the probability mass on each inferred-MP history.
We achieve this by sampling a history from the history sDAG, randomly perturbing it, and counting the proportion of resulting histories that contain the clade of interest.

Our random perturbation is designed to increase the likelihood of sampling histories with PCMs to reflect the insight that the inferred-MP histories differ from the simulated tree primarily by PCMs.
To do this, we \emph{collapse} edges according to the probability that their mutations occurred independently and resolve the resulting history (for an example see $s \rightarrow s_c$ in \autoref{fig:pcm_example}).
Specifically, for the set of mutations $M$ on a given edge $e$ we consider the probability that those mutations occurred independently.
We sample a number of additional mutations $n \sim \text{Binomial}(|Ch(e)|, p)$, where $Ch(e)$ is the set of child edges of $e$, $p = \prod_{m \in M} p_m$, and $p_m$ is the probability that mutation $m$ occurred.
We randomly select $n$ edges among $Ch(e)$ to collapse, and do so by removing them from their parent node, attaching them to the parent node of $e$, and adding $M$ to their list of mutations.
We repeat this probabilistic collapse step for each edge of the history in a pre-order traversal, and return a uniformly random fully resolved history that is compatible with the collapsed version.
The result is a history sampled from a distribution that has non-zero probability of sampling histories with PCMs.

If there are no mutations on the edge, we set the probability of collapse to 1 to reflect the fact that there is no phylogenetic support for that edge being present in the generative tree.
For edges with mutations, we find that setting the probability of each mutation to $p_m = 0.02$ works well on our simulations.
We observed that the mean of distribution of parsimony scores of histories sampled with this setting of $p_m$ was near the parsimony score of the simulated history for many simulated datasets.
Of course, this cannot be done without knowledge of the true number of mutations.
We expected that estimating $p_m$ from features of the dataset like parsimony diversity would yield similar support estimation results, and indeed see this in a preliminary implementation (results not shown).

\subsubsection*{\bf Fast diffused support sampling}
The two-step sampling procedure to estimate the MP-diffused posterior is very sample-inefficient.
Indeed, we compute this estimator by sampling histories from the history sDAG, injecting PCMs, resolving multifurcating nodes uniformly, and adding one unit to a counter associated to each clade observed in the resulting tree.
However, it is common for MP trees to include large multifurcations, where the data do not inform a choice between any of the many possible binary resolutions.
Such a large multifurcation may need to be sampled, and resolved randomly, very many times for the resulting clade supports to approximate the true support, since many clades occur in only a tiny fraction of possible binary resolutions.

We can reduce the number of samples required, and sample more clades with a single MP tree chosen from the history sDAG, by avoiding taking a single choice of binary resolution.
Instead, we can compute the fraction of binary resolutions containing each clade that could possibly result from a resolution of a multifurcating node.
Instead of adding $1 $ to the counter for each resulting clade, we can add this fraction.

A multifurcating node with $n $ children has $b(n) = (2n - 3)!! $ possible binary resolutions.
To compute how many of these resolutions contains a particular monophyletic grouping of $s < n$ children of the node, we need only multiply the number of binary arrangements of those $n $ children, and the number of binary arrangements relating the remaining $n - s $ leaves and the clade of size $s $.
Therefore, there are $b(s) \times b(n - s + 1) $ binary resolutions of a multifurcating node with $n $ children, containing each possible child clade of size $s $.
We can convert this to a fraction of possible binary resolutions by normalizing by $b(n) $.

Since the set of clades which can result from resolving a multifurcation of size $n $ scales as the powerset of $n $ elements, it is impractical to iterate through each possible clade for large multifurcations.
Instead, we ignore any clades whose frequency among all possible binary resolutions of the node would be less than a threshold of our choosing.
This adjustment to our clade sampling method allows us to make more accurate estimates of clade supports with fewer samples of MP trees from the history sDAG.

\subsection{Support evaluation}

\begin{figure*}[!t]
    \centering
    \includegraphics[width=0.75\textwidth]{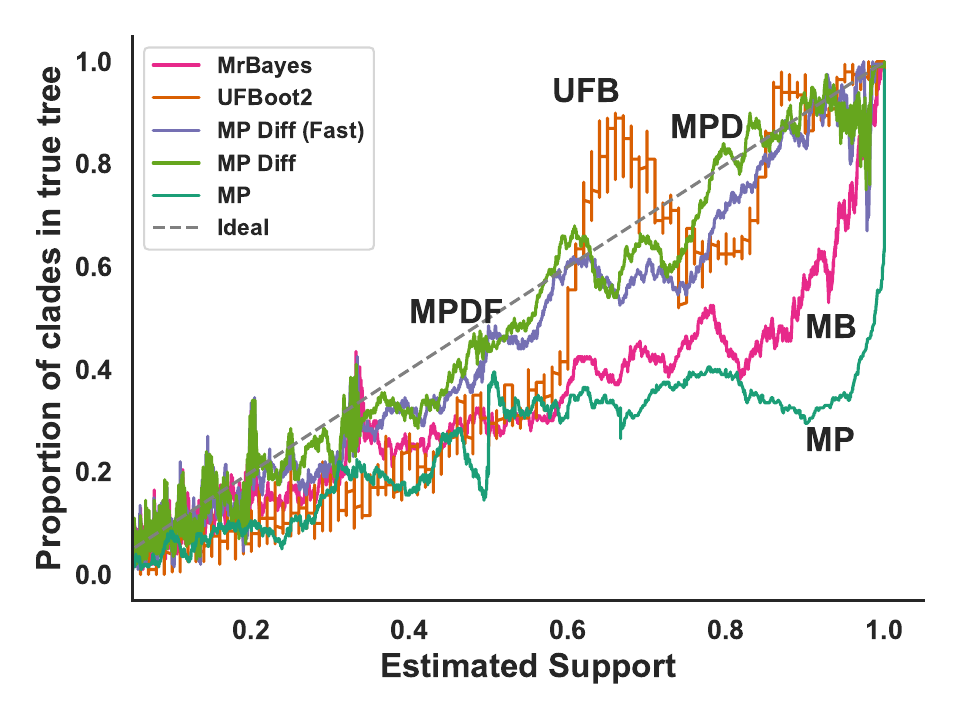}
    \caption{\
    Calibration curves for each support estimation method: MrBayes, UFBoot2, MP, MP-Diffused, and MP-Diffused (Fast). Direct labels are MB, UFB, MP, MPD, and MPDF, respectively.
    The x-axis is the estimated support value, and the y-axis is the proportion of clades in the 200 length window that were in the true tree.
    For ease of plotting, we omit points with an estimated support in the range $(0, 0.05)$, which does not affect the trend.
    }%
    \label{fig:coverage_plot}
\end{figure*}
We compare our MP-based support estimators to two traditional methods for support estimation: Bayesian posterior, implemented as MrBayes \cite{mrbayes2012},  and Bootstrap, implemented as Ultrafast Bootstrap (UFBoot2) \cite{Hoang2017UFBoot2}.
For MrBayes, we used $10^{8}$ total iterations, with $7\cdot 10^{7}$ iterations of burnin, and sampled a tree once every $1000$ iterations. We also used the most general GTR model with four categories of Gamma distributed rate variation.
For UFBoot2, we used 1000 bootstrap replicates and allowed nearest neighbor interchange operations to further optimize bootstrapped trees \cite{Simmons2014DivergentMV}.

For each method, we plot a calibration curve by first computing the support values for data generated by the simulation procedure described in \autoref{sec:simulations}.
We then sort the clades (across all simulated datasets) by their estimated support value.
For each estimated support value, we compute the empirical probability for that clade given its support value as the proportion of clades in a 200 length sliding window that are present in the generative tree. The calibration plot in \autoref{fig:coverage_plot} summarizes these results.

In \autoref{fig:coverage_plot}, overestimation can be seen when the plotted curve is far to the right of the ideal $y=x $ line.
This indicates that the estimated support is greater than true support.
The default MP support estimator consistently overestimates the supports values (i.e., for any given clade the estimated support is \emph{higher} than its empirical support).
MrBayes performs slightly better, but shows similar overestimation for estimated values in the range of 0.4 to 0.95.
UFBoot2 also overestimates, but its curve is significantly closer to ideal than MrBayes or default MP support, especially for estimated supports in the range of 0.6 to 0.9.
The calibration curves for both versions of MP-Diffused are closest to ideal for almost all predicted support values.

In addition to providing a potentially useful way of accurately estimating clade supports on densely sampled data, the success of the MP-Diffused clade support method provides further evidence that we can closely approximate the true distribution of trees conditioned on the data by adding PCMs to MP trees.

\section{Conclusions}
\label{sec:conclusions}

\noindent In summary, we have found that subparsimonious structures in simulated densely sampled phylogenies are often the result of a simple, local mutation duplication which we call a PCM.
This observation allows us to estimate clade support by perturbing MP trees to include these subparsimonious structures.
We find that this is a useful clade support estimator for simulations that are calibrated to mimic real densely-sampled data.

\subsection{Parsimony-suboptimal structures in simulations}
Subparsimoniousness is a classical concept and many metrics have been proposed to measure it \cite{Brooks1986AMO,Farris1989THERI}.
One widely used approach for measuring subparsimoniousness is Consistency Index (CI), which measures the amount of homoplasy in a given tree by taking the ratio $m/s$, where $m$ is the minimum number of character state changes to explain the observed sequences (i.e., the MP score) and $s$ is the observed number of character state changes that occurred.
A CI score of 1 means that the sequences evolved without homoplasy, and as the number of homoplasies goes up the CI goes down.
This metric gives a sense of how close an evolutionary history is to being MP, but it does not shed any light on which substructures are subparsimonious and what mutation patterns give rise to them.
In this work, we focus on these latter questions in a specific, yet highly relevant, parameter regime.

Despite the fact that homoplasy is a realistic feature of evolutionary histories, the MP criterion remains useful, particularly in the densely sampled regime where datasets are large and branch lengths are short.
Theoretical work has also shown that the ML tree will be the same as the MP tree under simple models of evolution if the branch lengths are assumed to be very small \cite{felsenstein1973maximum}.
More recent results show that in the near-perfect regime where tree length is small, it is likely that the splits in the (collapsed) MP tree are correct \cite{Wertheim2022-bm}.
In our simulations, we confirm that the true tree can frequently be made maximally parsimonious by making local edits around the subparsimonious splits (e.g., collapsing short edges and grouping them under a common mutation).

We go further by showing that these subparsimonious splits are often PCMs~(\autoref{sec:identifying_pcms}), and propose a method~(\autoref{sec:mp_diff}) that uses this insight to perturb MP trees in more realistic ways.
Although this approach produces fairly accurate estimates of clade support, the way in which it perturbs MP trees is very simple.
When collapsing an edge, it only considers the number of mutations on that branch and ignores the types of mutations that occur on it.
One line of future work could examine ways to make this form of perturbation even more realistic by, for example, taking the mutability of the mutations on a branch into account.
Our insight that PCMs are much more frequent among mutations with high mutability~(\autoref{sec:pcms_vs_mutation_rate}) indicates that incorporating this information into the MP tree perturbation could further improve tree sampling.

\subsection{Comparison to other methods}
Support estimates from diffused MP-sampling are well calibrated compared to the other approaches, and show the least overestimation.
All five methods exhibit overestimation: the naive MP clade support estimate is the most overconfident, followed by MrBayes and UFBoot2, and the two MP-diffused estimates appear to be most accurate.

For the naive MP clade support and MrBayes methods, the overestimation of support is correlated with the assumptions each method makes about parsimony scores of realistic trees.
The naive MP clade support method is based on the certainly incorrect assumption that all plausible trees explaining the data are MP.
It will therefore place too much probability on the clades that are among MP trees, and overestimate their support.

One could argue for the use of Bayesian methods to infer phylogenetic relationships between densely sampled data since they allow for subparsimonious structures.
However, we observe that a straightforward application of MrBayes, using the most general model settings available, samples trees that are also substantially more parsimonious than the simulated tree.
In fact, if we compute the best possible parsimony score of any tree on the simulated data, we observe that the distribution of parsimony scores in MrBayes samples is often concentrated very near this best MP score.
\autoref{fig:parsimony_posterior} in the Appendix shows an example of this.
Since a matching parsimony score is a necessary condition for two trees to have the same topology, this makes sampling the true tree with MrBayes extremely unlikely.
In fairness to MrBayes, this behavior is certainly the result of a model misspecification, since the most general model available in MrBayes is not expressive enough to match our simulation model.
However, since our simulations are designed to mimic real densely sampled data, we can expect a similar model misspecification, and corresponding overconfidence in clade support estimates, on real data.

Another challenge with Bayesian methods is that they are computationally intensive and cannot scale to extremely large sets of densely sampled data, such as that available for \sarscov{}.
Recent work leverages parsimony to inform proposals that speed up MCMC chain convergence and improve mixing time \cite{zhang2020using}.
We hope our results about subparsimonious structure, and our ability to sample realistic trees from a collection of MP trees, might motivate new proposal distributions for a Bayesian phylogenetic program.

UFBoot2 computes clade support estimates much faster than MrBayes.
Also, the authors show empirically that UFBoot2 can yield high-accuracy estimates of clade-membership probability even in the presence of model violations and thus be interpreted as probabilities.
However, \autoref{fig:coverage_plot} shows that UFBoot2 clade support estimates do not match observed clade supports in our simulations.
Moreover, unlike the other methods, the support values from UFBoot2 are not always overestimates.
For support ranging from 0.6 to 1, the support values oscillate between overestimation and underestimation.
Since the support values produced by UFBoot2 use ML trees on \emph{bootstrapped sequences}, it is not clear how this estimate is related to a distribution of phylogenetic trees on the \emph{real sequences}, and makes it difficult to discern the source of these inaccuracies.

\subsection{Importance of the history sDAG}
Producing clade support estimates from MP trees would be extremely difficult without the history sDAG.
Because parsimony diversity is often high on densely sampled, \sarscov-like data, any clade support estimate based on a single MP tree would likely be quite sensitive to \emph{which} MP tree is chosen.
The vast diversity of MP trees necessitates the use of an efficient data structure for searching for, storing, and sampling MP trees.
The history sDAG is therefore essential for tuning simulation parameters using parsimony diversity and sampling from the large collection of MP trees, which is necessary to estimate clade support.


\section*{Competing Interests}
\noindent The authors have no competing interests to declare.

\section*{Author Contributions}
\noindent WHS, WD, and FAM conceptualized the work.
WHS simulated data and organized PCM and support evaluation experiments.
WHS and WD developed the support algorithm.
WD and CW developed the minimum trimming algorithm.
WD, DHR, OM, and CY developed Larch, the C++ version of the history sDAG.
WHS, WD, and FAM prepared the original draft of the manuscript.
WHS, WD, FAM, and MAS edited the final draft of the manuscript.

%

\section*{Acknowledgements}
\noindent We thank Dr.~Nicola De Maio for helpful correspondence about the ideas in this work, especially in interpreting the effects of simulation parameters on generated data.
We would also like to thank Cheng Ye for correspondence about adapting the matOptimize software, as well as Russ Corbett-Detig and Yatish Turakhia.

This work was supported through US National Institutes of Health grant AI162611.
Scientific Computing Infrastructure at Fred Hutch was funded by ORIP grant S10OD028685.
Dr.\ Matsen is an Investigator of the Howard Hughes Medical Institute.

\section*{Availability of Data and Materials}
\noindent The \sarscov{} data used to produce clade simulations in the Simulations section was read from the public \sarscov{} tree distributed by the \usher{} team at \url{http://hgdownload.soe.ucsc.edu/goldenPath/wuhCor1/UShER_SARS-CoV-2/}.
This data originates from GenBank \cite{GenBankCite} at \url{https://www.ncbi.nlm.nih.gov}, COG-UK \cite{COGUKCite} at \url{https://www.cogconsortium.uk/tools-analysis/public-data-analysis-2/}, and the China National Center for Bioinformation \cite{CNCB1, CNCB2, CNBC3, CNBC4} at \url{https://bigd.big.ac.cn/ncov/release_genome}.

\section*{Code Availability}
\noindent The history sDAG data structure described in \cite{dumm2023representing}, as well as various algorithms described in this paper and in future work, are implemented in the open source Python package \historydag, which is available at \url{https://github.com/matsengrp/historydag}.
All code necessary to reproduce the \sarscov{} simulations and analysis is available at \url{https://github.com/matsengrp/hdag-benchmark}.

\bibliographystyle{IEEEtran}
\bibliography{IEEEabrv, main}

\begin{thebibliography}{10}
\providecommand{\url}[1]{#1}
\csname url@samestyle\endcsname
\providecommand{\newblock}{\relax}
\providecommand{\bibinfo}[2]{#2}
\providecommand{\BIBentrySTDinterwordspacing}{\spaceskip=0pt\relax}
\providecommand{\BIBentryALTinterwordstretchfactor}{4}
\providecommand{\BIBentryALTinterwordspacing}{\spaceskip=\fontdimen2\font plus
\BIBentryALTinterwordstretchfactor\fontdimen3\font minus \fontdimen4\font\relax}
\providecommand{\BIBforeignlanguage}[2]{{%
\expandafter\ifx\csname l@#1\endcsname\relax
\typeout{** WARNING: IEEEtran.bst: No hyphenation pattern has been}%
\typeout{** loaded for the language `#1'. Using the pattern for}%
\typeout{** the default language instead.}%
\else
\language=\csname l@#1\endcsname
\fi
#2}}
\providecommand{\BIBdecl}{\relax}
\BIBdecl

\bibitem{Wertheim2022-bm}
\BIBentryALTinterwordspacing
J.~O. Wertheim, M.~Steel, and M.~J. Sanderson, ``\BIBforeignlanguage{en}{{Accuracy in {Near-Perfect} Virus Phylogenies}},'' \emph{\BIBforeignlanguage{en}{Syst. Biol.}}, vol.~71, no.~2, pp. 426--438, Feb. 2022. [Online]. Available: \url{http://dx.doi.org/10.1093/sysbio/syab069}
\BIBentrySTDinterwordspacing

\bibitem{Kramer2023-sp}
\BIBentryALTinterwordspacing
A.~M. Kramer, B.~Thornlow, C.~Ye, N.~De~Maio, J.~McBroome, A.~S. Hinrichs, R.~Lanfear, Y.~Turakhia, and R.~Corbett-Detig, ``\BIBforeignlanguage{en}{{Online Phylogenetics with matOptimize Produces Equivalent Trees and is Dramatically More Efficient for Large {SARS-CoV-2} Phylogenies than de novo and {Maximum-Likelihood} Implementations}},'' \emph{\BIBforeignlanguage{en}{Syst. Biol.}}, May 2023. [Online]. Available: \url{http://dx.doi.org/10.1093/sysbio/syad031}
\BIBentrySTDinterwordspacing

\bibitem{USHER2021}
\BIBentryALTinterwordspacing
Y.~Turakhia, B.~Thornlow, A.~Hinrichs, N.~D. Maio, L.~Gozashti, R.~Lanfear, D.~Haussler, and R.~B. Corbett-Detig, ``{{Ultrafast Sample placement on Existing tRees} ({UShER}) enables real-time phylogenetics for the {SARS-CoV-2} pandemic},'' \emph{Nature Genetics}, vol.~53, pp. 809 -- 816, 2021. [Online]. Available: \url{https://api.semanticscholar.org/CorpusID:234360474}
\BIBentrySTDinterwordspacing

\bibitem{larch}
M.~Barker, O.~Milanov, W.~Dumm, D.~Rich, Y.~Turakhia, and F.~A.~M. IV, ``Larch: mapping the parsimony-optimal landscape of trees for directed exploration,'' 2025, manuscript under review.

\bibitem{dumm2023representing}
\BIBentryALTinterwordspacing
W.~Dumm, M.~Barker, W.~Howard-Snyder, W.~S. DeWitt~III, and F.~A. Matsen~IV, ``{Representing and extending ensembles of parsimonious evolutionary histories with a directed acyclic graph},'' \emph{Journal of Mathematical Biology}, vol.~87, no.~5, p.~75, 2023. [Online]. Available: \url{https://doi.org/10.1007/s00285-023-02006-3}
\BIBentrySTDinterwordspacing

\bibitem{haag2022}
\BIBentryALTinterwordspacing
J.~Haag, D.~Höhler, B.~Bettisworth, and A.~Stamatakis, ``{From Easy to Hopeless—Predicting the Difficulty of Phylogenetic Analyses},'' \emph{Molecular Biology and Evolution}, vol.~39, no.~12, 11 2022. [Online]. Available: \url{https://doi.org/10.1093/molbev/msac254}
\BIBentrySTDinterwordspacing

\bibitem{MAPLE2022}
\BIBentryALTinterwordspacing
N.~D. Maio, P.~Kalaghatgi, Y.~Turakhia, R.~B. Corbett-Detig, B.~Q. Minh, and N.~Goldman, ``{Maximum likelihood pandemic-scale phylogenetics},'' \emph{bioRxiv}, 2022. [Online]. Available: \url{https://api.semanticscholar.org/CorpusID:247631660}
\BIBentrySTDinterwordspacing

\bibitem{mcbroome2021}
\BIBentryALTinterwordspacing
J.~McBroome, B.~Thornlow, A.~S. Hinrichs, A.~Kramer, N.~De~Maio, N.~Goldman, D.~Haussler, R.~Corbett-Detig, and Y.~Turakhia, ``{A Daily-Updated Database and Tools for Comprehensive {SARS-CoV-2} Mutation-Annotated Trees},'' \emph{Molecular Biology and Evolution}, vol.~38, no.~12, pp. 5819--5824, 09 2021. [Online]. Available: \url{https://doi.org/10.1093/molbev/msab264}
\BIBentrySTDinterwordspacing

\bibitem{2022phastsim}
N.~De~Maio, W.~Boulton, L.~Weilguny, C.~R. Walker, Y.~Turakhia, R.~Corbett-Detig, and N.~Goldman, ``{phastSim: efficient simulation of sequence evolution for pandemic-scale datasets},'' \emph{PLoS computational biology}, vol.~18, no.~4, p. e1010056, 2022.

\bibitem{demaio2021}
\BIBentryALTinterwordspacing
N.~De~Maio, C.~R. Walker, Y.~Turakhia, R.~Lanfear, R.~Corbett-Detig, and N.~Goldman, ``{Mutation Rates and Selection on Synonymous Mutations in {SARS-CoV-2}},'' \emph{Genome Biology and Evolution}, vol.~13, no.~5, p. evab087, 04 2021. [Online]. Available: \url{https://doi.org/10.1093/gbe/evab087}
\BIBentrySTDinterwordspacing

\bibitem{mrbayes2012}
\BIBentryALTinterwordspacing
F.~Ronquist, M.~Teslenko, P.~van~der Mark, D.~L. Ayres, A.~Darling, S.~Höhna, B.~Larget, L.~Liu, M.~A. Suchard, and J.~P. Huelsenbeck, ``{{MrBayes} 3.2: Efficient Bayesian Phylogenetic Inference and Model Choice Across a Large Model Space},'' \emph{Systematic Biology}, vol.~61, no.~3, pp. 539--542, 02 2012. [Online]. Available: \url{https://doi.org/10.1093/sysbio/sys029}
\BIBentrySTDinterwordspacing

\bibitem{Drummond2007BEAST}
\BIBentryALTinterwordspacing
A.~J. Drummond and A.~Rambaut, ``{BEAST: Bayesian evolutionary analysis by sampling trees},'' \emph{BMC Evolutionary Biology}, vol.~7, no.~1, p. 214, 2007. [Online]. Available: \url{https://doi.org/10.1186/1471-2148-7-214}
\BIBentrySTDinterwordspacing

\bibitem{Felsenstein1985Bootstrap}
\BIBentryALTinterwordspacing
J.~Felsenstein, ``{Confidence Limits on Phylogenies: An Approach Using the Bootstrap},'' \emph{Evolution}, vol.~39, no.~4, pp. 783--791, 1985. [Online]. Available: \url{http://www.jstor.org/stable/2408678}
\BIBentrySTDinterwordspacing

\bibitem{Morel2020-mb}
\BIBentryALTinterwordspacing
B.~Morel, P.~Barbera, L.~Czech, B.~Bettisworth, L.~H{\"u}bner, S.~Lutteropp, D.~Serdari, E.-G. Kostaki, I.~Mamais, A.~M. Kozlov, P.~Pavlidis, D.~Paraskevis, and A.~Stamatakis, ``\BIBforeignlanguage{en}{{Phylogenetic analysis of {SARS-CoV-2} data is difficult}},'' \emph{\BIBforeignlanguage{en}{Mol. Biol. Evol.}}, Dec. 2020. [Online]. Available: \url{http://dx.doi.org/10.1093/molbev/msaa314}
\BIBentrySTDinterwordspacing

\bibitem{Hoang2017UFBoot2}
\BIBentryALTinterwordspacing
D.~T. Hoang, O.~Chernomor, A.~von Haeseler, B.~Q. Minh, and L.~S. Vinh, ``{UFBoot2: Improving the Ultrafast Bootstrap Approximation},'' \emph{Molecular Biology and Evolution}, vol.~35, pp. 518 -- 522, 2017. [Online]. Available: \url{https://api.semanticscholar.org/CorpusID:4359869}
\BIBentrySTDinterwordspacing

\bibitem{Simmons2014DivergentMV}
\BIBentryALTinterwordspacing
M.~P. Simmons and A.~P. Norton, ``{Divergent maximum-likelihood-branch-support values for polytomies},'' \emph{Molecular phylogenetics and evolution}, vol.~73, pp. 87--96, 2014. [Online]. Available: \url{https://api.semanticscholar.org/CorpusID:205838669}
\BIBentrySTDinterwordspacing

\bibitem{Brooks1986AMO}
\BIBentryALTinterwordspacing
D.~R. Brooks and R.~T. O'Grady, ``{A Measure of the Information Content of Phylogenetic Trees, and its use as an Optimality Criterion},'' \emph{Systematic Biology}, vol.~35, pp. 571--581, 1986. [Online]. Available: \url{https://api.semanticscholar.org/CorpusID:121169782}
\BIBentrySTDinterwordspacing

\bibitem{Farris1989THERI}
\BIBentryALTinterwordspacing
J.~S. Farris, ``{The Retention Index and Rescaled Consistency Index},'' \emph{Cladistics}, vol.~5, 1989. [Online]. Available: \url{https://api.semanticscholar.org/CorpusID:84287895}
\BIBentrySTDinterwordspacing

\bibitem{felsenstein1973maximum}
\BIBentryALTinterwordspacing
J.~Felsenstein, ``{Maximum Likelihood and Minimum-Steps Methods for Estimating Evolutionary Trees from Data on Discrete Characters},'' \emph{Systematic Biology}, vol.~22, no.~3, pp. 240--249, 09 1973. [Online]. Available: \url{https://doi.org/10.1093/sysbio/22.3.240}
\BIBentrySTDinterwordspacing

\bibitem{zhang2020using}
\BIBentryALTinterwordspacing
C.~Zhang, J.~P. Huelsenbeck, and F.~Ronquist, ``{Using Parsimony-Guided Tree Proposals to Accelerate Convergence in Bayesian Phylogenetic Inference},'' \emph{Systematic Biology}, vol.~69, no.~5, pp. 1016--1032, 01 2020. [Online]. Available: \url{https://doi.org/10.1093/sysbio/syaa002}
\BIBentrySTDinterwordspacing

\bibitem{GenBankCite}
E.~L. Hatcher, S.~A. Zhdanov, Y.~Bao, O.~Blinkova, E.~P. Nawrocki, Y.~Ostapchuck, A.~A. Schäffer, and J.~R. Brister, ``{Virus Variation Resource - improved response to emergent viral outbreaks},'' \emph{Nucleic Acids Research}, vol.~45, no.~D1, pp. D482--D490, 11 2016.

\bibitem{COGUKCite}
S.~M. Nicholls, R.~Poplawski, M.~J. Bull, A.~Underwood, M.~Chapman, K.~Abu-Dahab, B.~Taylor, B.~Jackson, S.~Rey, R.~Amato, R.~Livett, S.~Gon{\c c}alves, E.~M. Harrison, S.~J. Peacock, D.~M. Aanensen, A.~Rambaut, T.~R. Connor, and N.~J. Loman, ``{{MAJORA}: {Continuous} integration supporting decentralised sequencing for {SARS-CoV-2} genomic surveillance},'' \emph{bioRxiv}, 2020.

\bibitem{CNCB1}
S.~Song, L.~Ma, D.~Zou, D.~Tian, C.~Li, J.~Zhu, M.~Chen, A.~Wang, Y.~Ma, M.~Li, X.~Teng, Y.~Cui, G.~Duan, M.~Zhang, T.~Jin, C.~Shi, Z.~Du, Y.~Zhang, C.~Liu, R.~Li, J.~Zeng, L.~Hao, S.~Jiang, H.~Chen, D.~Han, J.~Xiao, Z.~Zhang, W.~Zhao, Y.~Xue, and Y.~Bao, ``{The Global Landscape of {SARS-CoV-2} Genomes, Variants, and Haplotypes in {2019nCoVR}},'' \emph{Genomics, Proteomics \& Bioinformatics}, vol.~18, no.~6, pp. 749--759, 2020.

\bibitem{CNCB2}
W.-M. Zhao, S.-H. Song, M.-L. Chen, D.~Zou, L.-N. Ma, Y.-K. Ma, R.-J. Li, L.-L. Hao, C.-P. Li, D.-M. Tian \emph{et~al.}, ``{The 2019 novel coronavirus resource},'' \emph{Yi chuan= Hereditas}, vol.~42, no.~2, pp. 212--221, 2020.

\bibitem{CNBC3}
Z.~Gong, J.-W. Zhu, C.-P. Li, S.~Jiang, L.-N. Ma, B.-X. Tang, D.~Zou, M.-L. Chen, Y.-B. Sun, S.-H. Song \emph{et~al.}, ``{An online coronavirus analysis platform from the National Genomics Data Center},'' \emph{Zoological research}, vol.~41, no.~6, p. 705, 2020.

\bibitem{CNBC4}
D.~Yu, X.~Yang, B.~Tang, Y.-H. Pan, J.~Yang, G.~Duan, J.~Zhu, Z.-Q. Hao, H.~Mu, L.~Dai, W.~Hu, M.~Zhang, Y.~Cui, T.~Jin, C.-P. Li, L.~Ma, L.~translation team, X.~Su, G.~Zhang, W.~Zhao, and H.~Li, ``{Coronavirus {GenBrowser} for monitoring the transmission and evolution of {SARS-CoV-2}},'' \emph{Briefings in Bioinformatics}, vol.~23, no.~2, 01 2022, bbab583.

\end{thebibliography}

\begin{IEEEbiography}[{\includegraphics[width=1in,height=1.25in,clip]{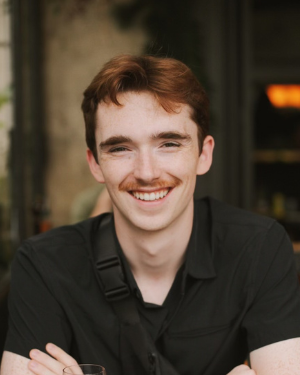}}]{William Howard-Snyder}
    received his B.S. in Computer Science from the University of Washington, Seattle, WA in 2022, and received his M.S. from the same institution in 2024.
    He is currently pursuing a Ph.D. in Computer Science at Princeton University.
    He completed this research in the Matsen Lab while earning his M.S.
    His research interests include developing methods to better understand stochastic processes in biology, such as viral evolution and mammalian embryogenesis.
\end{IEEEbiography}%

\begin{IEEEbiography}[{\includegraphics[width=1in,height=1.25in,clip]{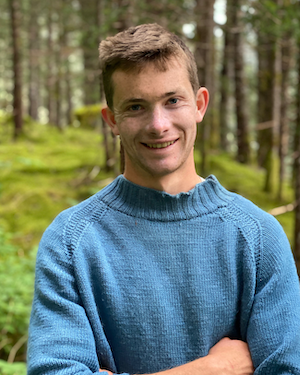}}]{Will Dumm}
    received his B.S. in Mathematics at Montana State University, Bozeman, MT in 2019, and received his M.S. from the same university in 2020.
    Since 2021, he works as a programmer in the Matsen Lab at Fred Hutchinson Cancer Center, Seattle, WA.
    There, he contributes to projects involving phylogenetic inference and models for antibody evolution.
\end{IEEEbiography}%

\begin{IEEEbiography}[{\includegraphics[width=1in,height=1.25in,clip]{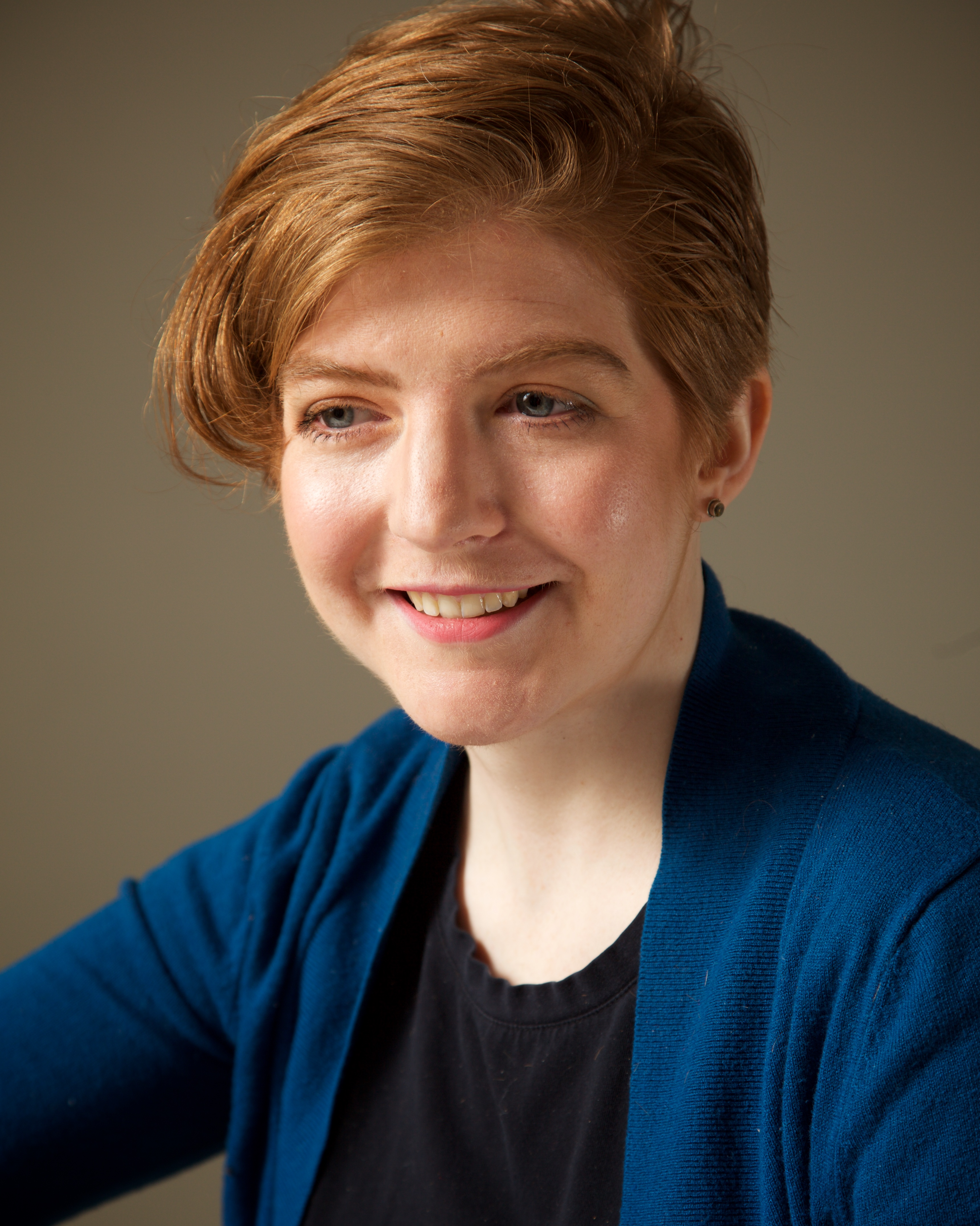}}]{Mary Barker}
    received a Bachelor's degree in Mathematics from Covenant College, Lookout Mountain, GA, and M.S. degrees in Computational Engineering, Data Mining, and Mathematics from the University of Chattanooga in Tennessee, TN, Tarleton State University, TX, and Washington University in St. Louis, MO, respectively.
    She received her Ph.D. in Mathematics from Washington University in St. Louis in 2022, and is currently working as a postdoctoral fellow in the Matsen Lab at Fred Hutchinson Cancer Center, Seattle, WA.
    Her research interests span a wide range in computational mathematics including large-scale phylogenetic reconstructions, nonconforming methods in finite element exterior calculus and quivers and their moduli spaces.
\end{IEEEbiography}%

\begin{IEEEbiography}[{\includegraphics[width=1in,height=1.25in,clip]{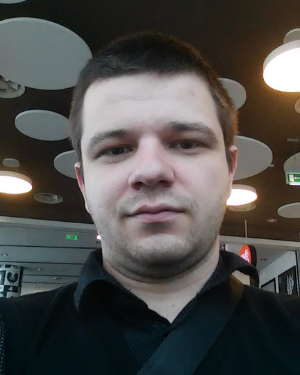}}]{Ognian Milanov}
    is working as a C++ programmer in the Matsen Lab at Fred Hutchinson Cancer Center, Seattle, WA.
    He is a software engineer with 17 years of experience in C++ development, consulting various organizations in US and EU.
    His interests involve designing and implementing high-performance, data intensive and real time applications.
\end{IEEEbiography}%

\begin{IEEEbiography}[{\includegraphics[width=1in,height=1.25in,clip]{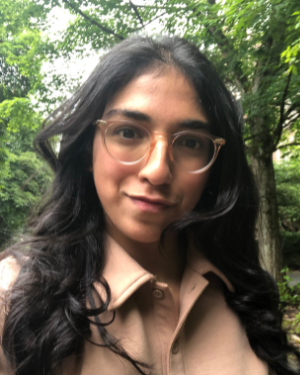}}]{Claris Winston}
    received her B.S. degree in Computer Science in 2024 from the University of Washington, Seattle, and is currently working towards her M.S. in Computer Science at the same school. She worked in the Matsen Lab at Fred Hutchinson Cancer Center, Seattle, WA, during her undergrad. Her work has revolved around applying AI and computational methods in biotechnology and accessibility domains.
\end{IEEEbiography}

\begin{IEEEbiography}[{\includegraphics[width=1in,height=1.25in,clip]{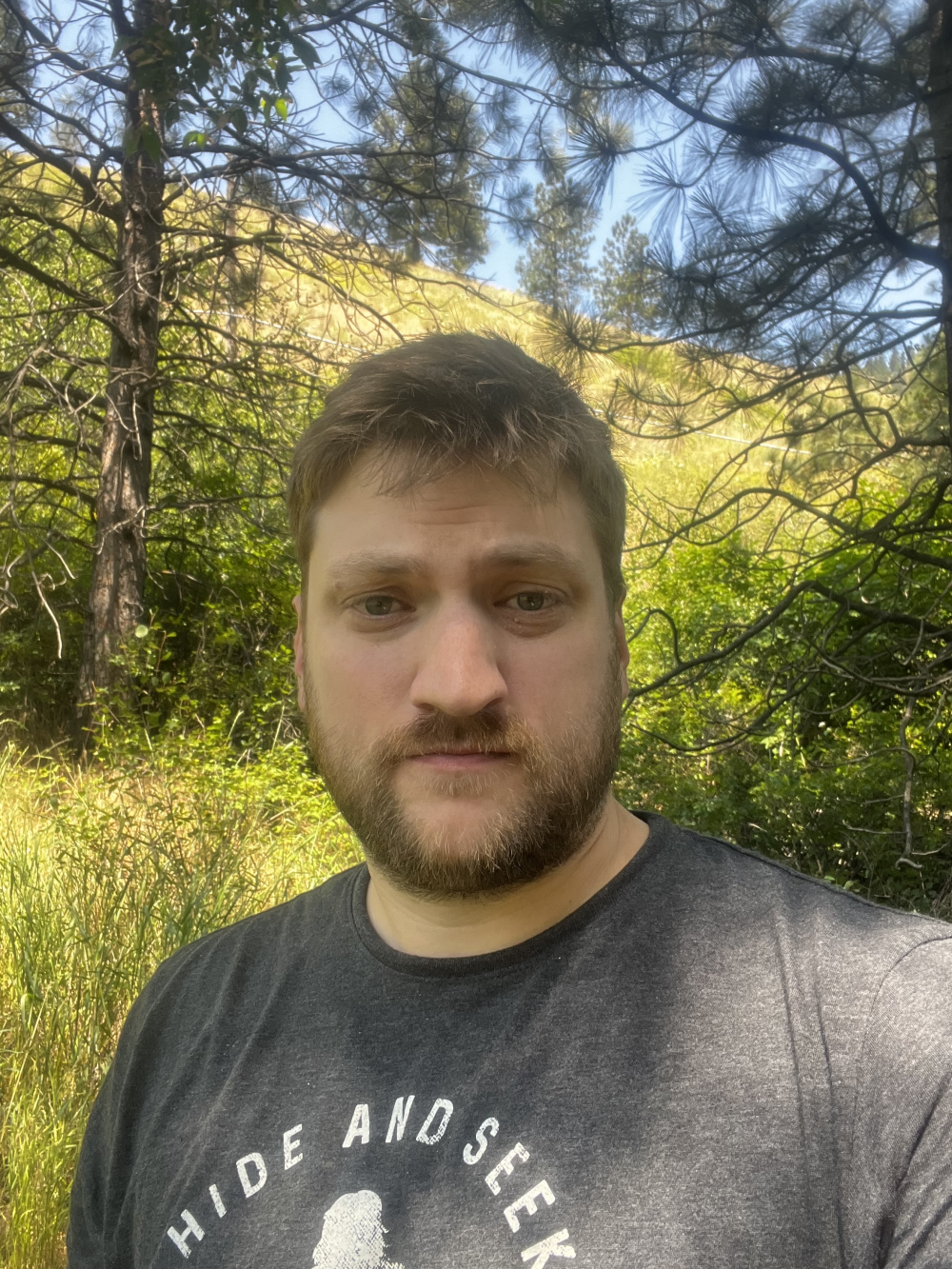}}]{David H. Rich}
    received his B.S. degree in Business and Mathematics from the University of Montana, Missoula, MT in 2019, and recieved his M.S. degree in Computer Science from the same institution in 2021.
    Since 2021, he has been working as a programmer in the Matsen Lab at Fred Hutchinson Cancer Center, Seattle, WA.
    His work has entailed contributing to projects involving phylogenetic inference.
\end{IEEEbiography}%

\begin{IEEEbiography}[{\includegraphics[width=1in,height=1.25in,clip]{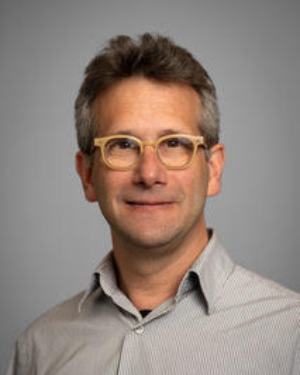}}]{Marc A. Suchard}
    received his B.A. in Biophysics from the University of California, Berkeley in 1995, followed by a Ph.D. in Biomathematics (2002) and M.D. (2004), both from UCLA. He is currently a Professor of Biostatistics and Human Genetics at the University of California, Los Angeles. Prof. Suchard's research focuses on scalable inference of stochastic processes for phylogenetics, pathogen evolution, and the analysis of large-scale electronic health records.
\end{IEEEbiography}

\begin{IEEEbiography}[{\includegraphics[width=1in,height=1.25in,keepaspectratio,clip]{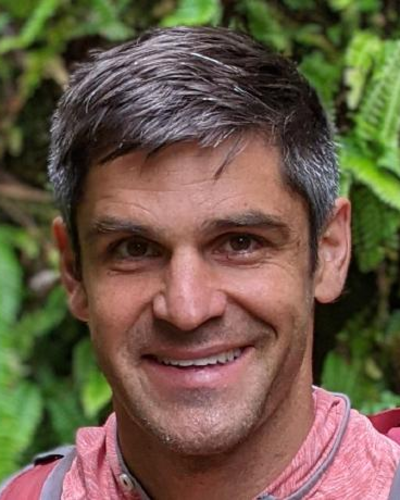}}]{Frederick A. Matsen IV}
received his B.S. degree in Mathematics from Stanford University in 2000, and his Ph.D. degree in Mathematics from Harvard University in 2006. 
He is currently a Professor at Fred Hutchinson Cancer Research Center and an HHMI Investigator. 
His research focuses on computational biology, phylogenetics, and immunology.
\end{IEEEbiography}

\clearpage
\section*{Supplementary Materials}
\beginsupplement

\section{Identifying PCM-Induced Subparsimonious Clades}
\label{sec:identifying_pcm_details}
Denote the most similar MP tree as $t^*$ and let $C_t, C_{t^*}$ be the set of clades in $t $ and $t^* $, respectively.
Then, we define the clades $C = C_t \cap C_{t^*}$ in both trees as \emph{jointly-MP with respect to $t^*$}, and the clades $C^{(t)} = C_{t} \setminus C$ in $t $ not present in $t^*$ as \emph{jointly-subparsimonious with respect to $t^*$}.
We pay special attention to the clades $C^{(t)}$ that have a corresponding clade in $C^{(t^*)}$ and systematically identify the differences between them.
Studying these differences sheds light on the primary mechanism by which simulated trees are subparsimonious.
To this end, our goals are to (i) find the largest set of clades in $t$ that are jointly-MP, and (ii) understand why the clades in $C^{(t)}$ are jointly-subparsimonious with respect to $t^*$.

The first objective can be formulated as solving the following optimization problem:
Let $T_{MP}$ be the set of MP trees on the dataset $X$.
\begin{equation}
  t^* = \argmin_{t' \in T_{MP}} d(t, t')
\end{equation}
where $d(t, t') = |C_t \triangle C_{t'} |$ is the RF distance between two trees.
We approach this problem by producing a large set of MP histories $\hat{T}_{MP}$ represented by a history sDAG $(V, E)$, as described in \autoref{sec:hdag_mp_tree_search}, trim it using the weight function defined in \autoref{sec:rf_distance_decomposes} to express only the minimum distance histories $\hat{T}^*_{MP}$.
If the minimizer is not unique, we trim the history sDAG further using \textbf{MinTrim} where our weight function is the summed sequence disimilarity between the simulated history and the histories in the history sDAG.
This is the sum of the hamming distances between the sequences of nodes that correspond to a clade that is present in both histories, and can be decomposed as sum of edge weight functions like RF distance.
Once we have our single most similar history $\hat{t}$, we address the second objective by mapping the clades of $C^{(t)}$ to similar clades in $C^{(\hat{t})}$, and analyzing the surrounding mutations to find the proportion of differing clades that are the result of small local deviations from an MP clade.

Specifically, we consider the collapsed versions of the trees, $t_c$ and $\hat{t}_c$, and find all the clades in $\hat{t}_c$ but not in $t_c$.
Then, for each differing clade $c \in C_{\hat{t}_c} \setminus C_{t_c}$, consider the node $v$ that corresponds to clade $c$ in $\hat{t}_c$.
Denote the parent of that node as $p$ and let $c_{p}$ be the clade corresponding to the parent.
If $c_p \in C_{t_c}$ then $t_c$ has a node that corresponds to $c_p$ as well.
Let's denote this node $p'$.
If the branches from $p'$ have multiple of the same mutation $m$, then $m$ is a PCM in $t$.
To determine whether the PCM corresponds to a structural difference between $t_c$ and $\hat{t}_c$, we check to see if grouping the branches that share $m$ under a single node in $t$ would recover the differing clade $c$.
If so, then that implies the simulated tree $t$ has at least one node that is the result of the mutations $m$ evolving independently, and that clade is a subset of $c_p$.
Thus, $c$ is the result of the PCM $m$.

\subsection{Minimizing Robinson-Foulds Distance in the History sDAG}
\label{sec:rf_distance_decomposes}

Before we can compare simulated trees to their most similar MP counterparts, we first must be able to find such a similar MP tree.
To do so, we employ an efficient trimming algorithm on the history sDAG, an innovation without which this paper would not be possible.
We start with a history sDAG containing as many MP trees on the simulated data as we can find.
\autoref{sec:hdag_mp_tree_search} details how we produce this initial history sDAG.

Next, we find the trees in this history sDAG which minimize Robinson-Foulds (RF) distance to the simulated tree.
Often, the history sDAG $(V, E)$ represents a set of histories $T$ so large that we could never hope to find the minimum distance tree by comparing them one-at-a-time.
Recall from~\autoref{sec:min_weight_trim} that \textbf{MinTrim}$(V, E, f)$ allows trimming with respect to a history weight function $f$ that decomposes as a sum over edges.
Here we show that RF distance to a given tree can be written in such a way, which implies that we can find the set of RF minimizing trees in $O(E)$ rather than $O(T)$.
We prove this in the following lemma.

\begin{lemma}
  Let $(V^*, E^*)$ be the history sDAG returned by $\textbf{MinTrim}\big(V, E, f\big)$ with $f(e) = 1 - 2\left (\1{C_t}(c_e)\right )$.
  Let $(V', E')$ be the history sDAG containing the histories in $(V, E)$ which minimize the RF distance to $t $.
  Then, $(V', E') = (V^*, E^*)$.
\end{lemma}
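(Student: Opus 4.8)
The plan is to show that the Robinson--Foulds distance from a history $t'$ in the history sDAG to the fixed reference history $t$ equals the edge-additive weight $g(t') = \sum_{e \in E_{t'}} f(e)$ up to an additive constant that is the same for every $t'$. Granting this, the lemma is immediate from the characterization of $\textbf{MinTrim}$ recalled in \autoref{sec:min_weight_trim}: $\textbf{MinTrim}(V,E,f)$ returns the history sDAG whose contained histories are exactly the $g$-minimizers among the histories of $(V,E)$, and these coincide with the $RF$-minimizers, which by definition are the histories of $(V',E')$; since the history sDAG built from a fixed set of histories is uniquely determined by that set, $(V^*,E^*) = (V',E')$.

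For the main computation, write $\mathcal{C}(s) = \{C(v) : v \in V_s\}$ for the set of clades appearing in a history $s$, so that $C_t = \mathcal{C}(t)$, and for an edge $e = (v, v_c)$ write $c_e = C(v_c)$ for the child clade it carries. I would start from the standard identity
\[
  RF(t, t') \;=\; \bigl|\mathcal{C}(t) \setminus \mathcal{C}(t')\bigr| + \bigl|\mathcal{C}(t') \setminus \mathcal{C}(t)\bigr| \;=\; |\mathcal{C}(t)| + |\mathcal{C}(t')| - 2\,\bigl|\mathcal{C}(t) \cap \mathcal{C}(t')\bigr|,
\]
in which $|\mathcal{C}(t)|$ does not depend on $t'$.

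The key structural step is to re-express the two remaining terms as sums over $E_{t'}$. Because a history is a tree rooted at its UA node, every node other than the UA node is the target of exactly one edge; composing this correspondence with $v \mapsto C(v)$ shows that $e \mapsto c_e$ is a bijection from $E_{t'}$ onto $\mathcal{C}(t')$ (the UA node's clade $X$ is also carried by the single edge leaving the UA node, and $v\mapsto C(v)$ is injective on the non-UA nodes since the histories considered here have no unifurcations apart from the UA node). Hence $|\mathcal{C}(t')| = \sum_{e \in E_{t'}} 1$ and $\bigl|\mathcal{C}(t)\cap\mathcal{C}(t')\bigr| = \sum_{e \in E_{t'}} \1{C_t}(c_e)$, and substituting gives
\[
  RF(t, t') \;=\; |\mathcal{C}(t)| + \sum_{e \in E_{t'}} \bigl(1 - 2\,\1{C_t}(c_e)\bigr) \;=\; |\mathcal{C}(t)| + \sum_{e \in E_{t'}} f(e) \;=\; |\mathcal{C}(t)| + g(t').
\]

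I expect the bijection to be the part that needs the most care: one must pin down precisely which clades are counted in $\mathcal{C}(\cdot)$, verify that the edge-to-clade map is genuinely one-to-one and onto for the class of histories under consideration, and check that any conventional choices in the definition of $RF$ (for instance whether the trivial clade $X$ and the leaf singletons are included, or whether one uses the symmetric-difference count or half of it) change $RF$ only by a constant across histories on the common leaf set $X$ and therefore leave the minimizer unaffected. Everything else is bookkeeping, and the conclusion then follows as described in the first paragraph.
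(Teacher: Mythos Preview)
Your proposal is correct and follows essentially the same approach as the paper: both decompose the symmetric-difference form of the RF distance into the constant $|C_t|$ plus $\sum_{e\in E_{t'}}\bigl(1-2\,\1{C_t}(c_e)\bigr)$ via the edge--clade bijection, and then invoke the correctness of $\textbf{MinTrim}$ for edge-additive weights. If anything, you are more careful than the paper about justifying that bijection and about how conventions for which clades are counted affect only an additive constant.
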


\begin{proof}
It suffices to show that there is a function $f$ such that $d(t, t') \propto \sum_{e \in t} f(e)$.
Informally, we want to show that we can represent a shifted version of RF distance with an edge decomposition.

Recall, that the RF distance between two rooted topologies $t $ and $t' $ is defined as $|C_t \triangle C_{t'} |$, where $C_t\subset \mathcal{P}(X) $ is the set of clades below nodes in the tree $t $.
Edges in a rooted tree are in bijection with clades via the assignment of each edge to the clade below its target node.
For an edge $e $ in the tree $t' $ let the clade associated to $e $ via this assignment be denoted $c_e $.
Therefore, a shifted version of rooted RF distance can be decomposed as a sum over edges of an arbitrary tree $t' $, with respect to a fixed reference tree $t $.
\begin{align*}
  d(t, t') &= |C_t \triangle C_{t'}|\\
           &= |C_{t'} \setminus C_t| - |C_{t'} \cap C_t| + |C_t|\\
           &= \sum_{c\in C_{t'}} \1{C_t^\sim}(c) - \sum_{c\in C_{t'}} \1{C_t}(c) + |C_t|\\
           &= |C_t| + \sum_{e\in t'} 1 - 2\left (\1{C_t}(c_e)\right )
\end{align*}
Since $|C_{t}|$ is constant with respect to the trees in the history sDAG, the tree(s) that minimize $d(t, t') - |C_{t}| $ also minimize $d(t, t')$.
\end{proof}

\subsection{Trimming}
\label{sec:min_trim_pseudocode}
Here we present pseudocode for \textbf{MinTrim} that was suggested by Definitions 14 and 15, and Lemma 11, from the paper that defined the history sDAG \cite{dumm2023representing}.
Let $(V, E)$ be a history sDAG that expresses the histories $T$, $f: E \rightarrow V$ be an edge weight, and $g:T \rightarrow \R$ such that for any history $(V', E') \in T $, $g(V', E') = \sum_{e \in E'} f(e)$.
The minimum weight of a subhistory below a given node $v = (\ell, U)$ is recursively defined as
\begin{align*}
  M_f(v) &= \sum_{C \in U} M_f(v, C)\\
  M_f(v, C) &= \min_{v_c \in \text{Ch}(v, C)}  M_f(v_c) + f(v, v_c)
\end{align*}
and $M_f(v) = 0$ if $v$ is a leaf node.

We precompute $M_f(v, C)$ and $M_f(v)$ with a single a post-order traversal, which requires visiting each edge once, and store them as maps.
Denote this operation, $\textbf{AnnotateMinWeight}(V, E)$. The full trimming algorithm is presented below.

\begin{algorithm}[H]
  \caption*{$\textbf{MinTrim}\big((V, E), f \big)$:
  Computes the minimum trim of a given a history sDAG $(V, E)$ with respect to edge weight function $f$.}
  \begin{algorithmic}
    \STATE $E^* = \{\}$, $V^* = \{\}$
    \STATE $M_f (v), M_f(v, C) \leftarrow \textbf{AnnotateMinWeight}(V, E)$

    \FOR {edge $e = (v, v_c)$ of $E$ in post-order}
    \IF{$M_f(v_c) + f(v,v_c) = M_f(v, C(v_c))$}
    \STATE add $e$ to $E^*$
    \ENDIF
    \ENDFOR

    \FOR{edge $e = (v, v_c)$ of $E$ in pre-order}
    \STATE add $v, v_c$ to $V^*$
    \ENDFOR

    \FOR {edge $e = (v, v_c)$ of $E$ in post-order}
    \IF {$v \not\in V^*$ or $v_c \not\in V^*$}
    \STATE remove $e$ from $E^*$
    \ENDIF
    \ENDFOR

    \STATE \textbf{return} $(V^*, E^*)$
  \end{algorithmic}
\end{algorithm}

This algorithm runs in $O(E)$, and by construction, \textbf{MinTrim} finds $(V^*, E^*)$ such that
\begin{align*}
  E' &= \left\{ (v,v_c) \in E \mid M_f(v_c) + f(v,v_c) = M_f(v, C(v_c))\right\},\\
  V^* &= \left\{ v \in V \mid \textit{v reachable from a path in } E' \right\}, and\\
  E^* &= \left\{(v,v_c) \in E' \mid v,v_c \in V^* \right\}.
\end{align*}
Lemma 11 of \cite{dumm2023representing} proves that $(V^*, E^*)$ does indeed contain the minimum weight histories.

\subsection{MinTrim Speedup}
We run \textbf{MinTrim} on history sDAGs (described in \autoref{sec:hdag_mp_tree_search}) built from our simulated data (described in \autoref{sec:simulations}) and compare the run time to naively searching for the minimum distance MP tree to the simulated one.
\textbf{Naive} finds the minimizing tree by iterating through each history in the history sDAG one at a time comparing the RF distance between that tree and the simulated tree.
As expected, this approach will take linear time in the number of histories contained in the history sDAG.
We observe that in practice \textbf{MinTrim} runs significantly faster, especially when the history sDAG contains many histories (\autoref{fig:scaling}).
For example, when there are $\approx 10^9$ histories, \textbf{MinTrim} computes the minimum in less than 10 minutes while \textbf{Naive} takes almost a day and a half.

Note that the runtime of \textbf{MinTrim} is a function of the number of edges in the history sDAG, not necessarily the number of histories.
This is why we see variation in the runtime for the same number of histories in Figure \ref{fig:scaling}.
In the worst case, when trees have no overlapping edges, \textbf{Naive} and \textbf{MinTrim} will have the same runtime.
However, typically many trees share many of the same edges.
\textbf{MinTrim} takes advantage of this to find the minimum distance maximum parsimony tree much more efficiently than would otherwise be possible without such a compact data structure as the history sDAG.

\begin{figure}[!t]
    \centering
    \includegraphics[width=0.5\textwidth]
            {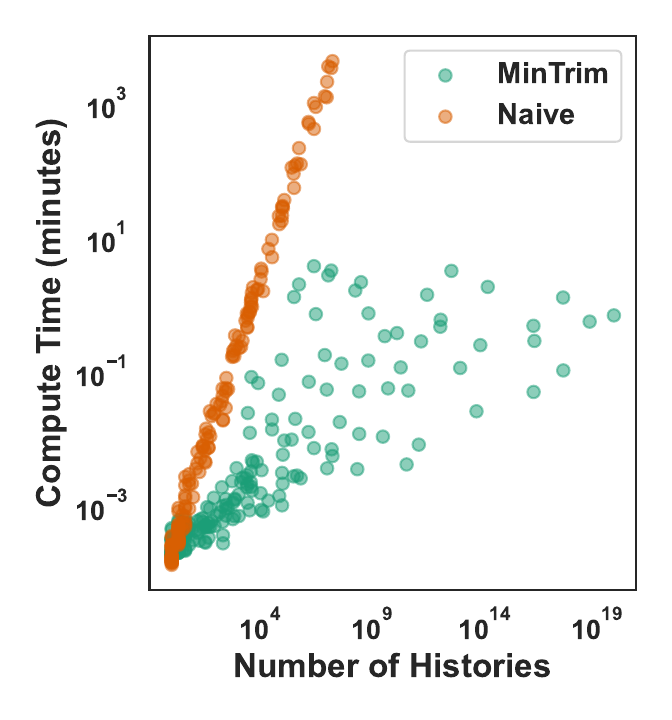}
    \caption{\
        Time in minutes to run \textbf{MinTrim} versus the naive approach on the history sDAG, which finds the minimum distance tree by comparing histories to the reference tree one at a time.
        The naive approach was cut off after two days.
    }%
    \label{fig:scaling}
\end{figure}

\subsection{Branch-Length Sensitivity}
\label{sec:branch-len-sensitivity}
We stress that our main result that PCMs compose the primary structural deviation between the true tree and an MP tree \emph{only applies in the densely sampled regime}.
To examine the importance of the small-branch assumption, we repeat the PCM-proportion experiment described in Section \ref{sec:identifying_pcms} for the AY.34.2 clade by performing 5 independent simulations on the same trees, but with branch lengths multiplied by 1, 2, 4, and 8 (Fig \ref{fig:pcm_vs_branch}).
For the original branch lengths, the median proportion of differing nodes that can be explained by PCMs is 81.8\%.
As we increase the branch lengths by a factor of 2, then 4, then 8, the median proportion decreases all the way to 52.6\%.
Based on these results, it appears that for less densely sampled data (i.e., trees with longer branches), the true tree's deviation from the set of MP trees is less explainable by PCMs.

\begin{figure}[!t]
    \centering
    \includegraphics[width=0.5\textwidth]
            {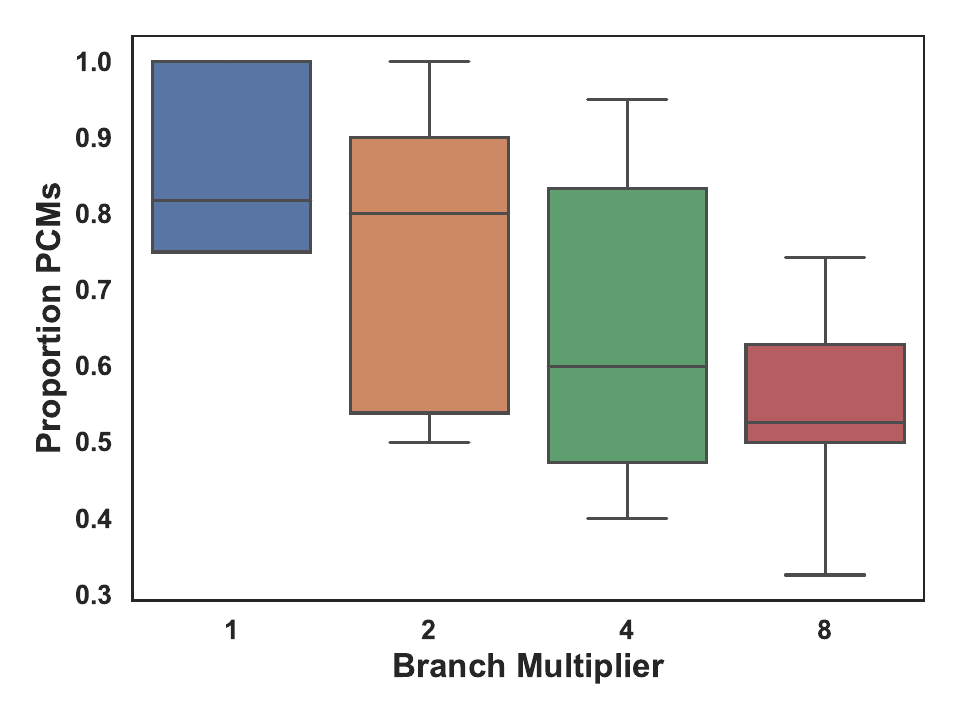}
    \caption{\
        Proportion of differences due to PCMs for various branch length multipliers. Each boxplot is taken over 5 independent trials.
    }%
    \label{fig:pcm_vs_branch}
\end{figure}

\clearpage
\section{Tuning Simulation Hyperparameters}

\begin{figure*}[!t]
  \centering
  \includegraphics[width=0.75\textwidth]{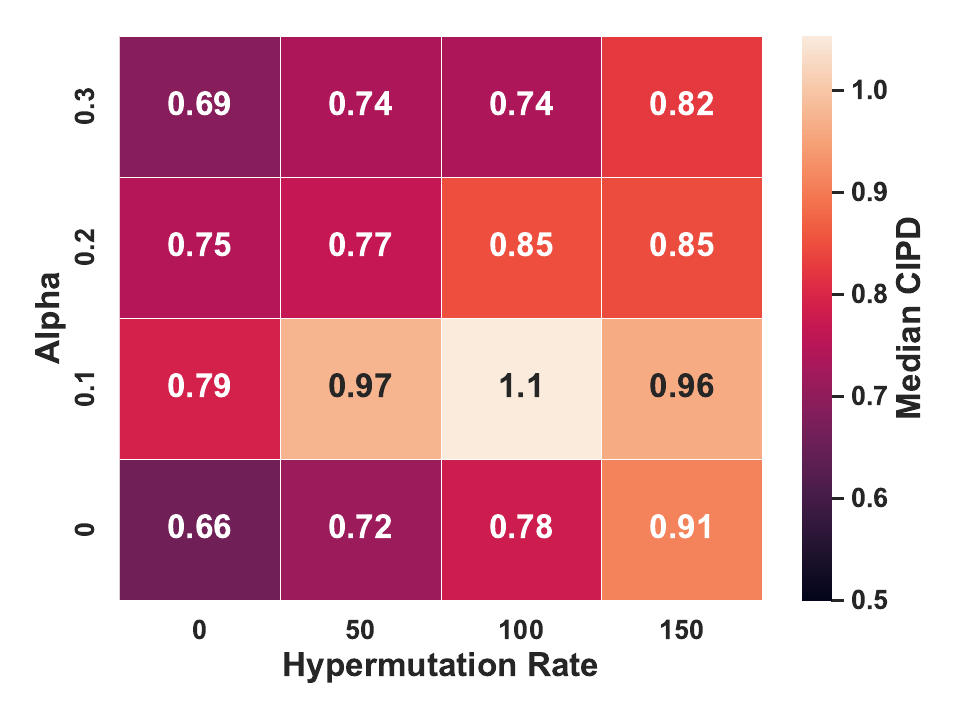}
  \caption{\
  Heatmap of median CIPD for each combination of $\alpha$ and $r$ across all simulated datasets.
  When $\alpha = 0$ or $r = 0$, that indicates that we do not include the respective form of rate variation.
  When the CIPD is 1, the PD of simulated data matches that of the corresponding real data.
  }%
  \label{fig:pd_heatmap}
\end{figure*}

Simulation parameters $(\alpha=0.1, r=50)$ yield among the best median CIPD values across all simulations (\autoref{fig:pd_heatmap}).
In addition to having a good median CIPD, we found that the variance in CIPD among simulations with $\alpha=0.1, r=50$ was lower than that of the other good parameter choices $(\alpha=0.1, r=100)$ and $(\alpha=0.1, r=150)$.

\section{CIPD for MAPLE Simulated Data}
\label{sec:cipd_maple}
In \cite{MAPLE2022}, the authors benchmark their likelihood-based phylogenetic inference software by extracting subtrees from the \usher{} global tree and simulating mutations on that topology using phastSim \cite{2022phastsim} under three classes of simulation:
\begin{itemize}
\item The ``basic'' simulation scenario uses the GTR model with no rate variation and full genomes available.

\item The ``rate variation'' scenario uses the GTR+G model with four genome site categories, all with the same frequency and with relative substitution rates of 0.1, 0.5, 1 and 2.

\item The ``sequence ambiguity'' scenario modifies the basic scenario to include ambiguous characters.
\end{itemize}

We analyze the CIPD under similar settings for the second type of simulation.
Specifically, we repeat the simulations described in \autoref{sec:simulations} including topology selection and simulation with phastSim, but use categorical Gamma distributed rate variation and no hypermutation.
Additionally, we use the UNREST substitution model instead of GTR.
CIPD is computed as described in \autoref{sec:simulations}.

\begin{figure*}[!t]
    \centering
    \includegraphics[width=0.75\textwidth]{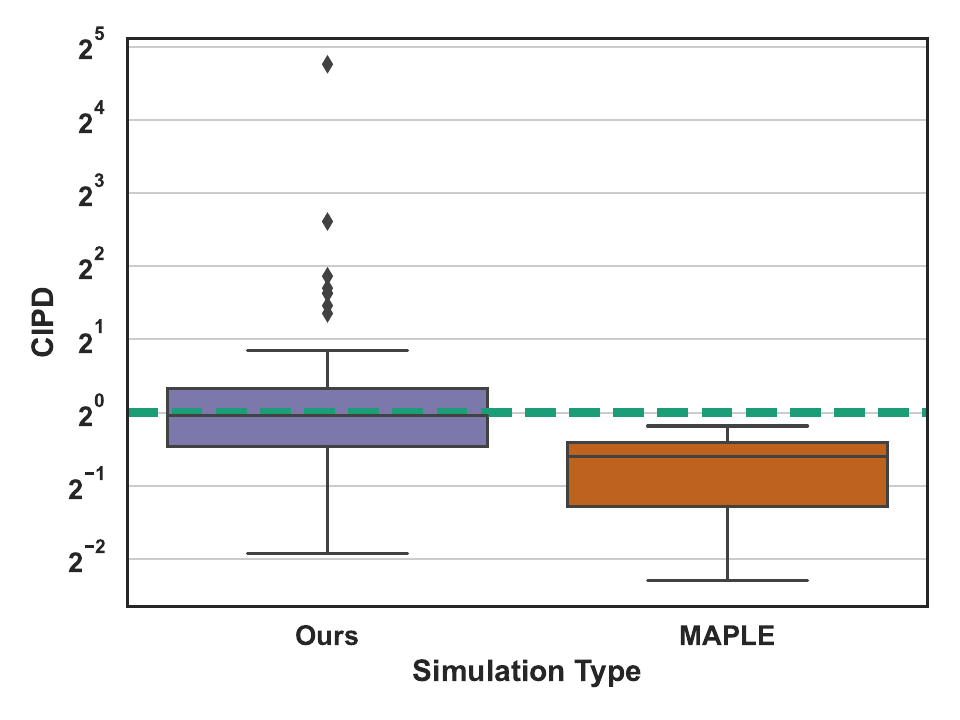}
    \caption{\
    The boxplots show CIPD values across all 200 simulations.
    The left boxplot uses our simulation settings with hypermutation at a rate of $r=50$ and Gamma distributed rates using $\alpha=0.1$.
    The right boxplot simulates without hypermutation and relative substitution rates of 0.1, 0.5, 1 and 2.
    The green dashed line indicates the CIPD value of simulated data with realistic PD.
    }%
    \label{fig:MAPLE_CIPD}
  \end{figure*}

We find that the CIPD values of the MAPLE datasets are significantly below 1 (\autoref{fig:MAPLE_CIPD}), indicating unrealistic levels phylogenetic inferential difficulty.
All the simulations had a CIPD below 90\% and half of them had a CIPD value below 70\%.
While some of our simulations have significantly higher CIPD than 1 (\autoref{fig:MAPLE_CIPD}), our distribution of CIPD is centered at 1 and almost all of our simulations have a CIPD value that differs from the ideal value, by no more than a factor of 2.
It appears that hypermutation is important for achieving realistic levels of PD.
We see further evidence of this in \autoref{fig:pd_heatmap} where all types of simulations without hypermutation (left-most column of heatmap) have a median CIPD less than 0.8.
While we did not investigate the PD realism of the ``basic'' and ``sequence ambiguity'' simulations from \cite{MAPLE2022}, we expect these types of simulations to have CIPD values that are at least as small as the ``rate variation'' scenario because they do not include rate variation.
We do however note that in \cite{MAPLE2022}, the authors simulate sequences on \emph{much larger} \usher{}-clades than the ones that we use (i.e., between 2,000-20,000 tips).
So, our analysis might not generalize to trees of that size.

Creating accurate and fair simulations is challenging, and the best method depends on the goal.
In our work, we focus on simulating as realistic of data as possible because that is essential to generalizing our conclusions to real data sets.
However, in \cite{MAPLE2022}, the authors use their simulations to compare their model's performance to other likelihood-based methods.
They chose to not use more realistic simulations out of fairness since the other methods don't include UNREST and typically use GTR with four categories for rate variation.

\section{MrBayes Parsimony Posterior}

\begin{figure*}[!t]
    \centering
    \includegraphics[width=0.47\textwidth]{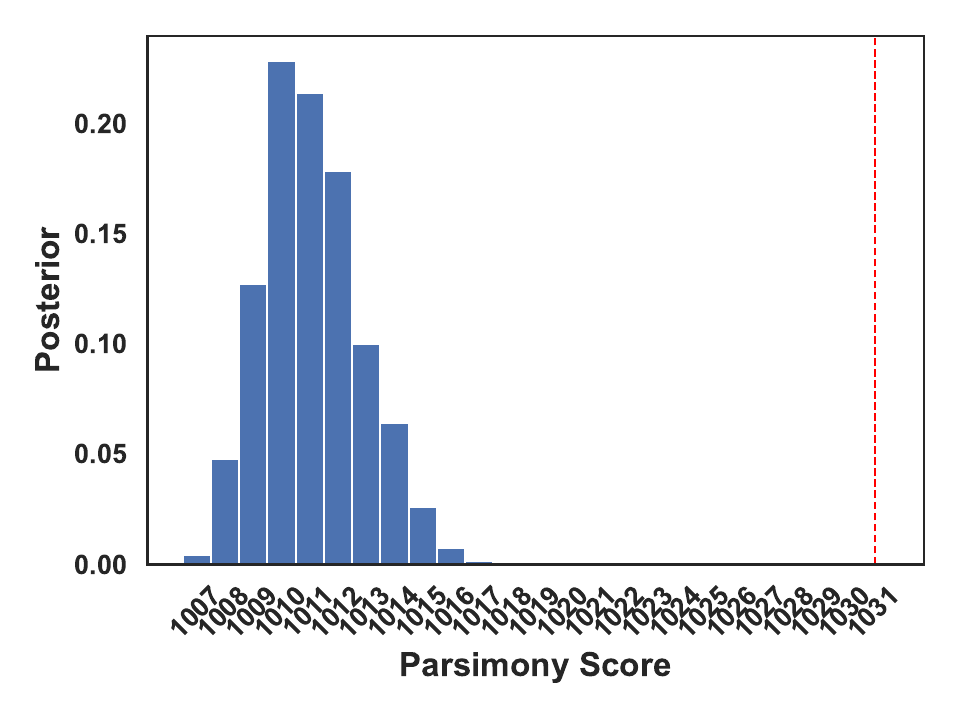}
    \includegraphics[width=0.47\textwidth]{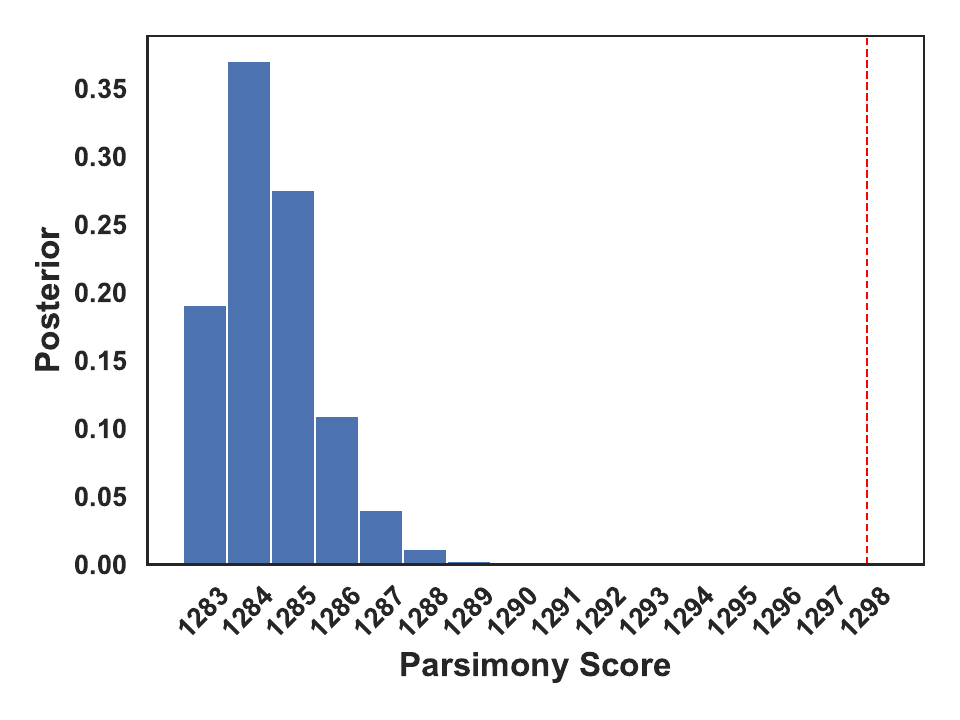}
    \caption{\
       Histogram showing the distribution of parsimony scores of topologies sampled with MrBayes, compared to the best possible parsimony score of the simulated topology, shown by the red dotted line.
       The topologies used for these simulations correspond to the subtree of \usher{}-clades AY.108 and P.1.7.
    }%
    \label{fig:parsimony_posterior}
\end{figure*}

We simulate data as described in \autoref{sec:simulations} and run MrBayes on the resulting set of sequences with the most general model settings available.
This includes using the GTR substitution model and Gamma distributed rate variation.

In \autoref{fig:parsimony_posterior}, the red line shows the best parsimony score of the tree that the sequences were simulated on, and the histogram shows the distribution of parsimony scores in trees sampled from the MrBayes posterior.
The MrBayes samples are significantly more parsimonious than the simulated tree.
We suspect that this occurs because our inference model assumptions don't match the model under which we're simulating data.
One example of this is that we include hypermutation in our simulations, whereas MrBayes doesn't model hypermutation.
However, hypermutation appears to be a realistic feature of the data, so given that MrBayes infers a posterior that puts too much weight on high parsimony trees in these simulations, we expect these issues to carry over to real data as well.

\end{document}